\newtheorem{theorem}{Theorem}
\newtheorem{corollary}{Corollary}
\newtheorem{claim}{Claim}
\newproof{proof}{Proof}
\newcommand{\inlineproofenvironment}[1][Proof]{\noindent\textsc{#1.}}
\begin{document}

\begin{frontmatter}

\title{Complexity of Canadian Traveler Problem Variants}

\author[bgu]{Dror Fried\corref{cor1}}
\ead{friedd@cs.bgu.ac.il}
\author[bgu]{Solomon Eyal Shimony}
\ead{shimony@cs.bgu.ac.il}
\author[bgu]{Amit Benbassat}
\ead{amitbenb@cs.bgu.ac.il}
\author[kth]{Cenny Wenner}
\ead{cenny@cwenner.net}

\cortext[cor1]{Corresponding author}
\address[bgu]{Department of Computer Science, Ben-Gurion University of the Negev
P.O.B. 653 Beer-Sheva 84105 Israel
}
\address[kth]{KTH - Royal Institute of Technology and Stockholm University, 
NADA, Att: Cenny Wenner
SE-10044, Stockholm}

\begin{abstract}

The Canadian traveler problem (CTP) is the problem of traversing a given graph,
where some of the edges may be blocked -- a state which is revealed only upon
reaching an incident vertex.
Originally stated by Papadimitriou and Yannakakis (1991), 
the adversarial version of CTP was shown to be PSPACE-complete, with
the stochastic version shown to be \#P-hard.

We show that stochastic CTP is also PSPACE-complete:
initially proving PSPACE-hardness for the dependent version of stochastic CTP,
and proceeding with gadgets that
allow us to extend the proof to the independent case.

Since for disjoint-path graphs, CTP can be solved in polynomial time,
we examine the complexity of the more general remote-sensing CTP,
and show that it is NP-hard even for disjoint-path graphs.

\end{abstract}

\begin{keyword}
Canadian Traveler Problem\sep
Complexity of Navigation under Uncertainty\sep
Stochastic Shortest Path with Recourse 

\end{keyword}

\end{frontmatter}
\section{Introduction}\label{sec:intro}

In the {\em stochastic Canadian traveler problem} (CTP) \cite{PY} we are
given an undirected connected weighted graph $G=(V,E)$,
a source vertex ($s \in V$), and a target vertex ($t \in V$).
Any edge $e\in E$ may be {\em blocked} with a known probability $p(e)$.
The actual state of each edge $e\in E$ becomes known
only upon reaching a vertex incident on $e$. Traversing an unblocked edge $e$ incurs
a non-negative cost equal to the weight of $e$.
The problem is to find a policy $\pi$ that
minimizes the expected traversal cost $C(\pi)$ from $s$ to $t$.

CTP formalizes a basic question of navigating in a partially known environment,
which is a fundamental task for transportation, autonomous robotic systems, computer games, and more.
Other variants of CTP have been introduced and analyzed in the research literature \cite{BNS,EK,PolyTsi90}.
There has been a strong recent resurgence of interest in CTP,
both theoretical \cite{XHSZZ, SW} and empirical \cite{PTM, RepCTP, BFS}.
A preliminary alternative proof
of Theorem \ref{thm:CTPtheorem} appears in an unpublished 
work by one of the authors \cite{wenner09hardness}.

When originally introduced in \cite{PY}, two variants were examined:
the adversarial variant and the stochastic variant. The adversarial variant was shown to be 
PSPACE-complete by reduction from QSAT. For the stochastic version, membership in PSPACE was shown, however only
\#P-hardness was established by reduction from the \emph{st-reliability problem}, leaving the question
of PSPACE-hardness open. Apparently proving the stronger result requires some form 
of dependency between the edges, achieved ``through the back door'' in
the adversarial variant. This paper settles the question, showing that
CTP is indeed PSPACE-complete.

Since the size of an optimal policy is potentially exponential in the size of the problem
description, we in fact show that it is PSPACE-hard to find even the optimal first action
at $s$. 

We begin with a variant
of CTP with dependent directed edges, \emph{CTP-Dep}, which allows for a simple proof of PSPACE-hardness
by reduction from QSAT, before proceeding with the proof 
for the ``standard'' stochastic CTP. Although the latter
result subsumes the former, proving the dependent CTP result first greatly simplifies
the intuition behind the proof of the standard case.

Another variant we explore is remote-sensing CTP, henceforth called \textit{Sensing-CTP}, in which additional
actions called \textit{remote-sensing actions} are allowed. 
Each such action reveals, for a certain cost, the status of a non-incident edge.
Recently it was shown \cite{BFS} that stochastic CTP can be solved in low-order
polynomial time on disjoint-path graphs. It was believed that generalizing
CTP to allow remote-sensing actions makes the problem harder -- indeed we show
that allowing remote-sensing makes CTP NP-hard even on disjoint-path graphs.

\section{Dependent directed CTP is PSPACE-hard}\label{sec:CTP-Dep}

This general form of \textit{dependent CTP} (called CTP-Dep) is a 5-tuple $(G, w, s, t, B)$ 
with $G=(V,E)$ a directed graph, a weight function $w:E\to\Re^{\geq 0}$, 
$s, t \in V$ are the start and goal vertices respectively, and
a distribution model $B$ over binary random variables indexed by the edges $E$.
We assume that $B$ is specified as a Bayes network  
over these random variables $E$ \cite{Pearl} as follows. Each random variable expresses
the state (blocked, unblocked) of an edge in $E$ (abusing notation we use the symbols indicating
the edges to denote the respective random variables). 
The Bayes network $(E, A, P)$ consists of a set of directed arcs $A$ between the random variables $E$, so that $(E,A)$ is a directed acyclic
graph. $P$ describes the conditional probability tables, one for each $e\in E$.

\begin{theorem}\label{thm:CTPDep}
CTP-Dep is PSPACE-hard.
\end{theorem}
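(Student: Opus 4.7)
The plan is to reduce from QSAT, the canonical PSPACE-complete problem. Given a closed QBF $\Phi = Q_1 x_1 \cdots Q_n x_n\, \phi(x_1,\ldots,x_n)$ with $\phi$ a CNF over clauses $C_1,\ldots,C_m$, I would construct a CTP-Dep instance $(G,w,s,t,B)$ together with a cost threshold $K$ such that the minimum expected cost from $s$ to $t$ is at most $K$ iff $\Phi$ is true. In the simulation, the traveler plays the existential moves (choosing to minimize expected cost) while the Bayes-net randomness plays the universal moves as independent fair coins. Since deciding the threshold inequality is no easier than identifying an optimal first action at $s$, this would also yield the stronger PSPACE-hardness of the first-action variant mentioned in the introduction.

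The graph $G$ is a single directed $s$-to-$t$ backbone that visits, in order, variable gadgets $X_1,\ldots,X_n$ and then clause gadgets $Y_1,\ldots,Y_m$; directedness forbids backtracking and forces commitments in quantifier order. For a universal $x_i$, the gadget $X_i$ is a diamond with two parallel zero-weight edges $t_i$ and $f_i$ that $B$ makes perfectly anti-correlated fair coins: exactly one is unblocked, the traveler observes which, and is forced onto that side, committing to a uniformly random assignment of $x_i$. For an existential $x_i$, $X_i$ also has two parallel edges, both unblocked with probability one, and the traveler chooses freely. Each clause gadget $Y_k$ is a diamond with a cheap edge $g_k$ of weight $0$ and a detour edge of weight $W$, with $B$ setting $g_k$ unblocked exactly when the current assignment satisfies $C_k$. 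The conditional probability table for $g_k$ has size linear in the width of $C_k$, so $B$ stays polynomial in $|\Phi|$. Taking $W$ strictly above the total weight of any all-cheap traversal and $K$ equal to that all-cheap cost, a detour is forced exactly when some clause fails under the realized assignment, mirroring the QSAT semantics.

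The main obstacle is making ``the current assignment'' well-defined for existential variables. For universals it is immediate --- $x_i$ is true iff $t_i$ is unblocked, a Bayes-net quantity. For existentials, however, $B$ fixes all edge states before the trip, so the traveler's free choice at $X_i$ does not by itself flip any Bayes-net variable that the clause CPTs can condition on, and declaring both parallel edges deterministically unblocked would make every literal of $x_i$ trivially true in $B$, collapsing the reduction. I would resolve this by inserting into each existential $X_i$ an auxiliary pair of Bayes-net witness edges forming an independent fair coin, exposing exactly one witness per sub-branch of the traveler's choice, feeding those witnesses (rather than the parallel backbone edges) into the clause CPTs, and gating each sub-branch with a short observation-penalty edge whose cost pushes the optimal policy to align the chosen side with the witness value under which the remaining sub-formula is satisfiable. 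Calibrating the witness coins, the penalty weights, $W$, and $K$ so that the traveler's optimal policy is precisely an existential winning strategy for $\Phi$ is the technical crux; once that is in place, the remaining tasks --- verifying polynomial size of $G$, $w$, and $B$, checking the threshold equivalence with $\Phi$, and transferring PSPACE-hardness from the threshold decision to the optimal-first-action variant --- are routine.
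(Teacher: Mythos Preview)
Your reduction outline and your identification of the core obstacle are on target: the Bayes network fixes all edge states before traversal, so no edge's state can depend on which branch the traveler takes at an existential gadget. But the proposed fix does not close this gap. If the clause CPTs condition on witness coins $w_i$ rather than on the traveler's actual choice, then the cost incurred at the clause gadgets is determined by the random vector $(w_i)$ regardless of what the traveler does at $X_i$. Any penalty that forces the traveler to ``align'' with $w_i$ after observing it simply turns the existential variable into a universal one; any mechanism that lets the traveler deviate from $w_i$ severs the link between the traveler's choice and what the clause gadgets see. No calibration of weights can recover genuine existential semantics here, because the structural issue is that the clause edge $g_k$ is a Bayes-net variable while the traveler's existential choice is not, and you have not introduced any channel by which the latter influences the former.

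The paper sidesteps this entirely by abandoning the idea that clause gadgets directly test satisfaction. Instead, each clause $C_l$ carries an independent fair coin, and copies of that coin are placed as observation-only edges along precisely those true/false paths corresponding to literals appearing in $C_l$. The traveler's route through the variable section thus determines which clause coins are \emph{observed}: if the induced assignment satisfies every clause, every coin is seen. A final ``exam'' section has a zero-cost edge whose state is a deterministic parity function of all $m$ clause coins; the traveler can infer its state with certainty if and only if every coin was observed, and otherwise faces a residual $1/2$ probability of paying a penalty. The existential choice is therefore encoded through \emph{information gathered}, not through Bayes-net state---and that is the idea missing from your proposal.
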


\inlineproofenvironment\ by reduction from QSAT \cite{GJ}.
Recall that QSAT is the language of all satisfiable quantified boolean formulas (QBF), 
$\Phi = \forall x_1\exists x_2 ... \varphi(x_1,x_2,..., x_n)$, where $\varphi$ is a boolean formula in conjunctive
normal form, with $n$ variables and $m$ clauses, which contain literals, each is consisting of either a variable or a negated variable.
We assume that each clause has at most 3
literals. 
Given a QBF $\Phi$, construct a CTP-Dep instance $(G_\Phi, w, s, t, B)$ as follows
(see Fig.\ \ref{fig:QBFreduction}). $G_\Phi$ consists of a \textit{variables section},
and an \textit{exam section}. Vertices in the variables section have labels 
starting with $v$ or $o$, and vertices of the exam section begin with $r$.
An always unblocked edge $(s,t)$, called the \textit{default edge}, has a cost of $h$.
All other edges, unless mentioned otherwise, are zero-cost edges known to be
unblocked.

\begin{figure}[h]
    \centering
    \includegraphics[scale=1]{../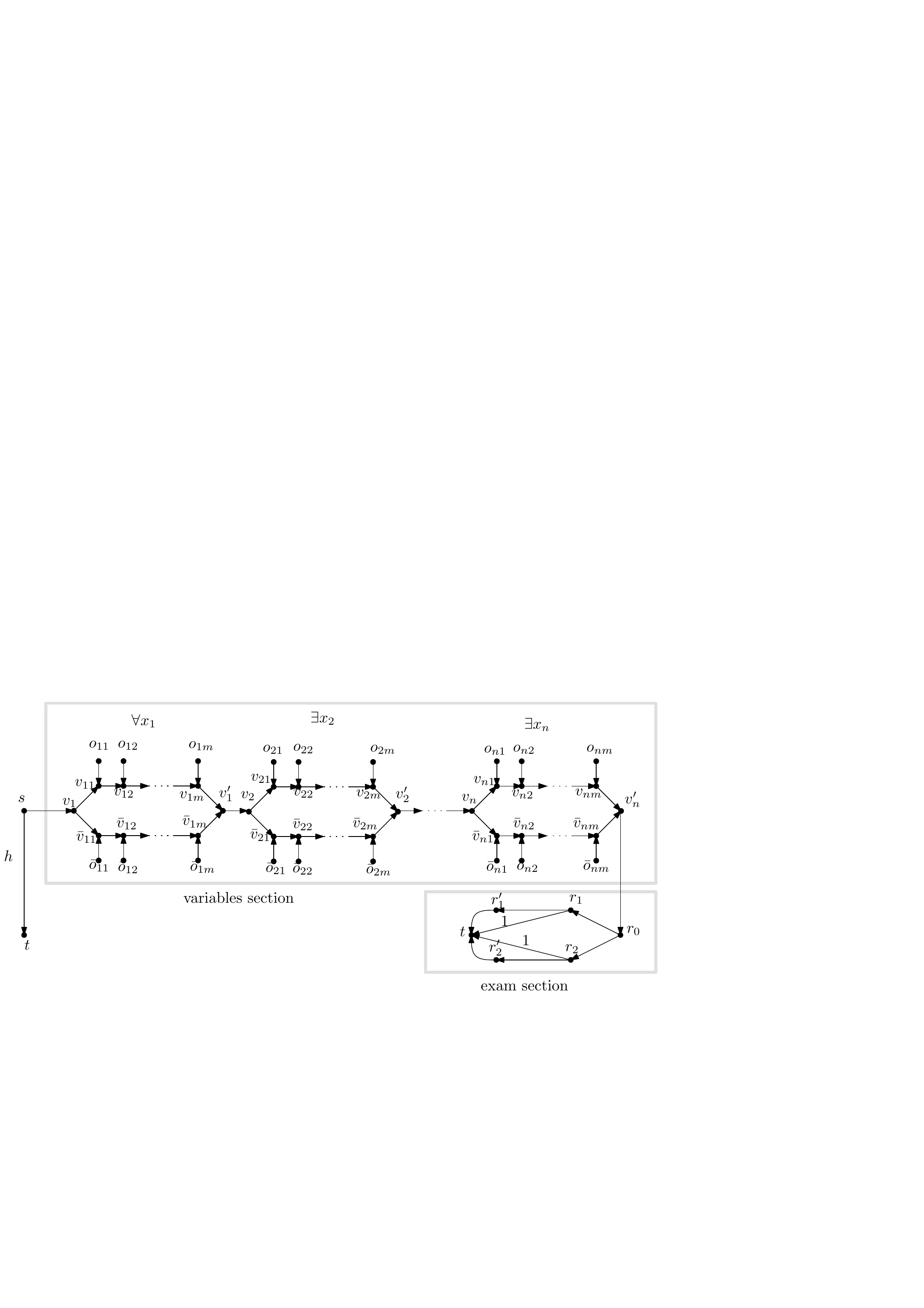}
    \caption{Reduction from QBF to CTP-Dep. Note that vertex $t$ appears twice in order to simplify the physical layout.}
    \label{fig:QBFreduction}
\end{figure}
The variables section contains a subsection $X_i$ for every variable $x_i$,
which begins at $v_i$, and ends at $v'_i$. 
For every $i<n$, $X_i$ is connected to $X_{i+1}$ 
through an edge $(v'_i,v_{i+1})$. 

Every $X_i$ contains a 
\textit{true-path} $(v_i,v_{i1},\cdots,v_{im},v'_i)$, and a 
\textit{false-path} $(v_i,\bar{v}_{i1},\cdots,\bar{v}_{im},v'_i)$. 
If $x_i$ is a universal variable (resp. existential variable), the edges $(v_i,v_{i1})$, and $(v_i,\bar{v}_{i1})$ 
are called \textit{universal edges} (resp. \textit{existential edges}). While the existential edges are
always unblocked, we set the universal
edges to have have blocking probability $1/2$ and to be mutually
exclusive: for each universal variable $x_i$, exactly one of $(v_i,v_{i1})$, 
and $(v_i,\bar{v}_{i1})$ is blocked. 

In addition, for every $1\leq i\leq n$, and $1\leq l\leq m$, 
there are edges $(o_{il},v_{il})$, and $(\bar{o}_{il},\bar{v}_{il})$,
called \textit{observation edges}. 
These edges
are only meant to be observed, as their source vertices are unreachable. 
Every observation edge is blocked with probability $1/2$, and the dependency of the observation edges
is defined according to appearance of
variables in the clauses of $\Phi$, as follows: an observation edge $(o_{il},v_{il})$ 
(resp. $(\bar{o}_{il},\bar{v}_{il})$) is 
considered ``in'' a clause $C_l$ if $x_i$ appears unnegated (resp. negated) in clause $C_l$.
All observation edges that are ``in'' the same clause $C_l$ co-occur: they are
either all traversable or all are blocked (with probability $1/2$, as stated above),
independent of all other edges that are not ``in'' $C_l$.

The exam section consists of an \textit{odd-path} $(r_0,r_1,r_1',t)$, 
and an \textit{even-path} $(r_0,r_2,r_2',t)$. In addition construct
edges $(r_1,t)$, and $(r_2,t)$ with cost $1$. 
The edges $(r_1,r'_1)$, and $(r_2,r'_2)$ are called \textit{choice edges}. 
The edge $(r_1, r'_1)$ (resp. $(r_2, r'_2)$) is unblocked if and only if the observation edges are
unblocked for an {\em odd} (resp. {\em even}) number of clauses. Hence exactly one of the choice edges is blocked
\footnote{Note that as every clause has at most three literals, this dependency
structure can be realized with a Bayes network
of constant in-degree, a construction that has polynomial size.}.
If at least one observation edge in each clause is observed, the status 
of the choice edges can be determined with certainty.
Otherwise the posterior blocking probability of each choice edge remains $1/2$. 
In order to prove the theorem, it is sufficient to prove the following claim:

\begin{claim}
An optimal policy has expected cost $0$ just when $\Phi$ is satisfiable
(in which case the optimal first action is to traverse $(s,v_1)$).
Otherwise (for any $h<2^{-\frac{n}{2}-1}$)  the optimal policy is to traverse
 $(s,t)$ with a cost of $h$.
\end{claim}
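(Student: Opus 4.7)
My plan is to set up a correspondence between traversals of the variables section and plays of the QBF game, and then derive the cost of the exam section as a function of how many clauses have been observed.

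First, I would identify, for each $X_i$, the true-path with $x_i=\top$ and the false-path with $x_i=\bot$. At a universal $x_i$, exactly one of the two initial edges is blocked, so nature forces a uniformly random assignment; at an existential $x_i$, both initial edges are open and the traveler chooses. As the traveler passes $v_{il}$ (resp.\ $\bar v_{il}$), the incident observation edge's status is revealed, and by construction this edge lies ``in'' clause $C_l$ iff the chosen literal for $x_i$ occurs in $C_l$. Consequently, the set of clauses whose hidden co-occurrence bit $B_l$ is learned during the traversal equals the set of clauses satisfied by the encoded assignment.

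Second, I would analyze the exam section. If every clause has been observed, the traveler knows each $B_l$, hence the parity, hence which of the two choice edges is unblocked; routing through the correct $r_i$ yields an exam cost of $0$. If some clause is unobserved, the posterior on the hidden $B_l$'s leaves the parity uniform on $\{0,1\}$, so each choice edge is still blocked with posterior probability $1/2$. Because the graph is directed and $r_1,r_2$ have no edge back to $r_0$, the traveler must commit to an $r_i$ before learning which choice edge is open; with probability $1/2$ the choice edge is blocked and they pay the cost-$1$ edge $(r_i,t)$, for a conditional expected cost of exactly $1/2$.

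Combining these two ingredients gives the cases. If $\Phi$ is satisfiable, the traveler plays a winning Skolem strategy at the existential vertices; every clause is observed deterministically, the exam section costs $0$, and the first action $(s,v_1)$ achieves total expected cost $0<h$. If $\Phi$ is unsatisfiable, a standard QBF game-tree argument shows that for every adaptive traveler strategy there is a deterministic nature response leading to an unsatisfying assignment, and since the roughly $n/2$ universal coin flips are independent and uniform, this bad branch is reached with probability at least $2^{-n/2}$; the expected exam cost is then at least $2^{-n/2-1}$, so for $h<2^{-n/2-1}$ the default edge $(s,t)$ strictly dominates and gives optimal cost exactly $h$.

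The main obstacle is the lower bound in the unsatisfiable case: one has to ensure that the traveler cannot use observations of individual $B_l$'s, partial parity information, or already-revealed universal edges to push the failure probability below $2^{-n/2}$. This reduces to checking, from the Bayes network structure, that the $B_l$'s and the universal-edge variables are independent, so that the conditional distribution over the yet-unseen universal coin flips remains uniform given the traveler's entire history, and the QBF minimax argument applies at every node of the game tree.
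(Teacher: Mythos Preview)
Your proposal is correct and follows essentially the same approach as the paper's proof: both establish the QBF-to-traversal correspondence, compute the conditional exam cost ($0$ if all clauses observed, $1/2$ otherwise via the posterior parity remaining uniform), and bound the failure probability in the unsatisfiable case by $2^{-n/2}$ using the uniform distribution over universal assignments. You are in fact more careful than the paper about the one nontrivial point---that the traveler cannot exploit partial $B_l$ observations or previously revealed universal edges to beat the $2^{-n/2}$ bound---which the paper's short proof leaves implicit.
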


\begin{proof}
Suppose first that $\Phi$ is satisfiable.
Then there is a policy for
assigning values to all the existential variables, each given every setting of the
enclosing universal variables, such that $\varphi$ is true. Following this policy for
each existential variable $x_i$, i.e. traversing edge $(v_i, v_{i1})$ if
$x_i$ should be \textit{true}, and $(v_i, \bar{v}_{i1})$ otherwise,
leads (by construction) to following a path such that at least one observation edge
is seen in every clause. Hence, the ``exam'' is passed (i.e. the 0-cost unblocked path in the exam section is chosen)
with certainty.

Next, suppose $\Phi$ is not satisfiable. Then there is at least one setting of
the universal variables for which some clause $C_l$ is false under the same conditions,
and thus no edge ``in'' clause $C$ is observed.
Since every setting of these variables occurs with probability $2^{-\frac{n}{2}}$
(assuming w.l.o.g. that $n$ is even), in these cases the exam is ``flunked'' (picking the 
path where only the expensive edge is unblocked)
with probability $1/2$, and thus the total expected cost of
starting with $(s, v_1)$ is at least $2^{-\frac{n}{2}-1}$. 
Hence, setting $h <2^{-\frac{n}{2}-1}$, the optimal policy is
to traverse $(s, t)$ if and only if $\Phi$ is not satisfiable. \qed

\end{proof}

\section{Complexity of CTP}\label{sec:ComCTP}

Having shown that CTP-Dep is PSPACE-hard, we extend the proof to the ``standard'' stochastic
independent undirected edges CTP:

\begin{theorem}\label{thm:CTPtheorem}
CTP is PSPACE-complete.
\end{theorem}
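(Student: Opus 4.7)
The plan is to reduce from QSAT once again, reusing the overall architecture of the proof of Theorem~\ref{thm:CTPDep} but eliminating both the directedness and the explicit dependencies through local gadgets built from independent undirected blocked edges. Membership in PSPACE was already established by Papadimitriou and Yannakakis, so only hardness is at stake. Throughout I would keep the threshold $h$ in the same role as in the CTP-Dep construction and verify that the cumulative expected error introduced by the gadgets remains safely below $h$.

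First, I would replace the directed edges of $G_\Phi$ by undirected ones and attach sufficiently small orientation penalties (or short high-probability blocked detours) on those arcs along which backtracking could short-circuit the reduction. Since in the directed proof the traveler never benefits from revisiting a previous variable subsection, the penalty can be chosen much smaller than $h$ while still strictly exceeding any savings from going the wrong way, so the set of plausibly-optimal policies remains essentially the same as in the directed case.

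Second, and this is the heart of the argument, I would design three gadgets to replace the three kinds of dependent edges. A \emph{mutual-exclusion} gadget for each universal variable $x_i$: a chain of $\Theta(n+m)$ independent parallel-edge pairs, where the traveler is routed through the first pair whose realization is unambiguous (exactly one edge blocked), so that the probability of an all-ambiguous chain is exponentially small. A \emph{co-occurrence} gadget that forces the observation edges belonging to the same clause $C_l$ to share a common hidden bit, implemented by channelling each observation event through a shared sub-graph whose state is controlled by a single independent $1/2$-edge. Finally, a \emph{parity} gadget in the exam section realizing the XOR of the clause bits as the state of a single edge, built as a layered cascade of independent $1/2$-edges conditioned on the now-realized clause outcomes.

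The main obstacle is the co-occurrence gadget: inducing positive correlation between several edges using only independent edges requires the common bit to be physically present in the graph while also unreachable via any cheap shortcut, since otherwise the traveler could read clause bits directly without first committing to a variable assignment. I would therefore bury the shared edge inside a sub-gadget whose only entry points are the observation vertices already present in the CTP-Dep construction, ensuring that information about the clause bit is gained only by committing to a genuine variable path. Once the gadgets are in place, the case analysis of the claim in Section~\ref{sec:CTP-Dep} carries over essentially verbatim: if $\Phi$ is satisfiable the traveler passes the exam at zero expected cost; otherwise the probability of leaving some clause entirely unobserved still exceeds $2^{-n/2-1}$ up to a negligible gadget error, and the optimal first action remains $(s,t)$ at cost $h$. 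Combined with PSPACE membership, this yields PSPACE-completeness.
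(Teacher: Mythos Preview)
Your proposal has a genuine gap: you attempt to port the CTP-Dep construction by \emph{simulating} each of its dependencies (mutual exclusion, co-occurrence, parity) with independent-edge gadgets, but the parity gadget cannot be built the way you describe. You say it is ``a layered cascade of independent $1/2$-edges conditioned on the now-realized clause outcomes'' --- yet in the independent model no edge's state can be conditioned on another's, and a traveler who merely \emph{knows} the clause bits has no way to make an independent choice-edge correlate with their XOR. The CTP-Dep exam worked precisely because the choice edges $(r_1,r_1')$ and $(r_2,r_2')$ were functionally determined by the observation edges; strip that dependency away and the exam loses its teeth regardless of how much information the traveler has gathered. Your co-occurrence gadget has a related issue: you want several observation edges to ``share a common hidden bit,'' but with independent edges the only way to share a bit is to make it a single physical edge that several paths visit --- at which point you are no longer simulating co-occurrence, you are restructuring the instance.

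That restructuring is exactly what the paper does, and it is worth seeing how different the resulting exam is. The paper abandons the odd/even parity test entirely. Each clause $C_l$ is represented by a \emph{single} independent edge $(r_4^l,r_5^l)$ sitting on one long exam path; an \emph{observation gadget} $OG(u,v,o)$ is a sub-graph that, when crossed, forces the traveler to pass through the vertex $o=r_5^l$ and thereby observe that clause edge. Directionality is enforced not by small penalties but by \emph{baiting gadgets}: paths studded with many $0$-cost probability-$1/2$ shortcuts to $t$, so that entering is cheap but retreating (once all shortcuts are seen blocked) costs at least $L$. The exam at $r_0$ is then: if every clause edge and every guard edge has been observed unblocked, walk the exam path for cost $2(m+1)$; otherwise take an $L$-cost shortcut. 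There is no zero-cost outcome even when $\Phi$ is satisfiable; instead one computes two thresholds $B_0<B_1$ for the expected cost and places $h$ strictly between them. The mutual-exclusion of universal edges is also dropped --- both universal edges are independent with probability $1/2$, and the case where both are blocked is simply absorbed into the analysis. In short, the hardness does not come from mimicking dependencies but from designing gadgets that make information-gathering and commitment inseparable; your outline does not yet contain that idea.
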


In order to prove Theorem \ref{thm:CTPtheorem}, we use the same general outline of the reduction from QBF as in the proof of Theorem \ref{thm:CTPDep}.
However, in CTP-Dep, dependencies and directed edges restrict the available choices, thereby simplifying
the proof. Here we introduce special gadgets that limit choice de facto,
and show that any deviation from these limitations is necessarily sub-optimal.
Policies that obey these limitation are called {\em reasonable policies}.
Each such gadget $g$ has an {\em entry} terminal $Entry(g)$, and an {\em exit} terminal $Exit(g)$;
an attempt to traverse $g$ from $Entry(g)$ to $Exit(g)$ is henceforth called to {\em cross $g$}.
The gadgets operate by allowing a potential {\em shortcut} to the target $t$; crossing these gadgets
may either end up at $Exit(g)$, with some probability $q(g)$, or at $t$ instead. 
The edges that allow direct access to $t$ are called \textit{shortcut} edges.

We introduce the gadgets in sections \ref{sec:BG} and \ref{sec:OG}, and
the CTP-graph construction in Section \ref{sec:CTP-cons}. The actual proof of Theorem \ref{thm:CTPtheorem} is
in Section \ref{sec:ThmProof}.
In the description of the gadgets and CTP-graph, we sometimes add zero cost always traversable edges. These edges, 
which appear unlabeled in figures \ref {fig:BGsubfig1},\ref{fig:OGsubfig2} and \ref{fig:CTPReduction}, were added solely in order to simplify the physical layout as a figure;
any $u$, $v$ connected by such an edge can be considered to be the same vertex.

\subsection{Baiting Gadgets}\label{sec:BG}

A \textit{baiting gadget} $g=BG(u,v)$ with parameter $L>1$ is a three-terminal weighted graph
(see Fig.\ \ref{fig:BGsubfig1}): 
an entry terminal $u=Entry(g)$, an exit terminal $v=Exit(g)$,
and a shortcut terminal which is always $t$.
The latter terminal is henceforth omitted in external reference to $g$, for conciseness.

The baiting gadget consists of $N+1$ uniform sections of an undirected path $(u,v_1,\cdots,v_N,v)$ with
total weight $L$,
each intermediate vertex has a 0-cost shortcut to $t$ with a blocking probability $1/2$.
In addition, there is a shortcut edge with cost $L$ from the terminals $u,v$ to $t$.
Set $N=2^{\left\lceil \log_2(4L)\right\rceil}-1$.
We assume that $g$ is connected to the graph such that any policy executed at $u$, in which
the edge $(u,v_1)$ is not traversed, has an expected cost of at least $1$. Later on we see that
this assumption holds in the CTP-graph construction.

Let $\pi$ be the following partial policy: \textit{when at $u$ for the first 
time, proceed along the path $(u,v_1,\cdots,v_N,v)$ to $v$, taking the 0-cost shortcut to $t$ whenever possible, 
but never backtracking to $u$.}

It is easy to show that even if we need to take the cost $L$ shortcut at $v$, the
expected cost of executing $\pi$ at $u$ for the first time is less than $1$. 
Because of the $L$ cost shortcut edge $(v,t)$, the expected cost of any optimal policy once at $v$
(knowing all 0-cost shortcuts are blocked) is no more than $L$, hence under 
any reasonable policy, $g$ is not retraced. A similar argument holds for retracing to $u$ from 
other locations along the path
$(u,v_1,\cdots,v_N,v)$. Hence we have:

\begin{figure}[ht]
    \centering
    \includegraphics[scale=1.5]{../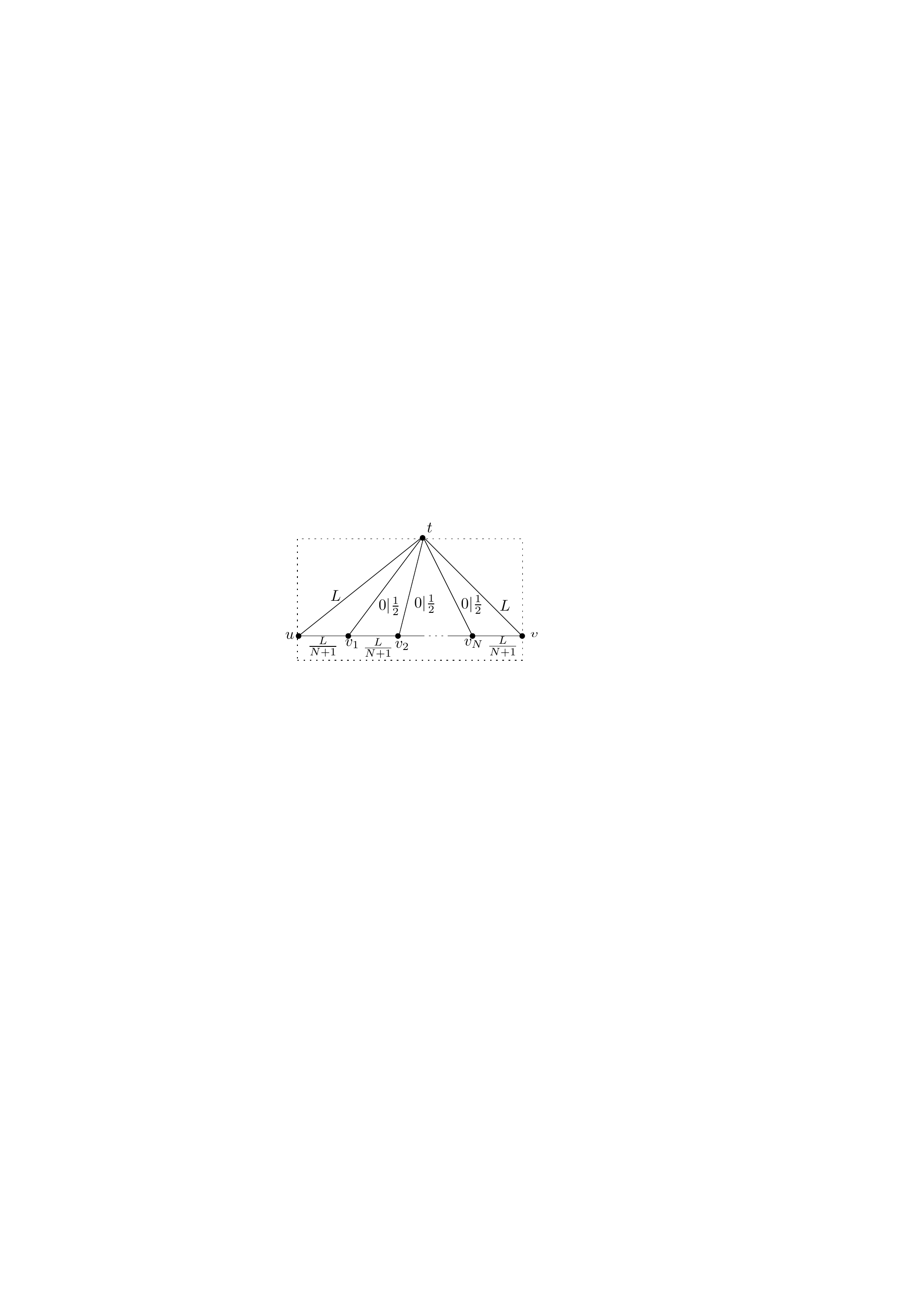}
    \caption{A Baiting Gadget $BG(u,v)$ with a parameter $L>1$. Edge label $w|p$ denotes cost$|$blocking probability.
    The optimal policy at $u$ is to cross the path $(u,v_1,\cdots,v_N,v)$, taking a shortcut edge to $t$
    whenever such an edge is found unblocked. After reaching $v$, retracing to $u$ in $g$ costs at least $L$. }
    \label{fig:BGsubfig1}
\end{figure}

\begin{claim}\label{clm:BGcLaim}
When at $u$ for the first time, $\pi$ is optimal for a baiting gadget $BG(u,v)$ with a parameter $L>1$.
After reaching $v$, it is suboptimal to backtrack to $u$ in $g$.
\end{claim}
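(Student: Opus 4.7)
The plan is to establish the two assertions of the claim separately, relying on the external assumption that any policy from $u$ which avoids $(u,v_1)$ has expected cost at least $1$.

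First I would bound the expected cost of $\pi$ when invoked at $u$ for the first time. Let $T$ denote the index of the first intermediate vertex $v_i$ whose $0$-cost shortcut to $t$ is unblocked, with $T=N+1$ if none exists. Independence of the shortcut states yields $P(T=k)=2^{-k}$ for $1\le k\le N$ and $P(T=N+1)=2^{-N}$. Under $\pi$, when $T=k\le N$ the cost is $kL/(N+1)$, while when $T=N+1$ the cost is $L$ for traversing the path plus $L$ for the shortcut at $v$, totaling $2L$. Summing and using $\sum_{k\ge 1} k\cdot 2^{-k}=2$:
\[
E[C(\pi)] \;\le\; \frac{L}{N+1}\sum_{k=1}^{N} k\,2^{-k} \;+\; 2L\cdot 2^{-N} \;<\; \frac{2L}{N+1} + 2L\cdot 2^{-N}.
\]
The parameter choice $N+1 = 2^{\lceil \log_2 4L\rceil}\ge 4L$ makes the first term at most $1/2$ and the second term at most $4L\cdot 2^{-4L}\le 1/4$ for $L>1$, so $E[C(\pi)]<1$.

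Second, for optimality of $\pi$ at $u$, I would divide alternative policies into two classes. Any policy that does not traverse $(u,v_1)$ has expected cost at least $1$ by the assumption on the external graph, hence is strictly worse than $\pi$. For a policy that does traverse $(u,v_1)$ but deviates from $\pi$ within $g$, the deviation is of one of two types: (i) it fails to take an observed unblocked $0$-cost shortcut at some $v_i$ (dominated, since taking the shortcut ends the traversal at cost $0$ while any continuation incurs a positive cost), or (ii) it backtracks along the path (purely wasteful, since re-visiting a vertex reveals no new information about shortcut states and only adds path cost). A local exchange argument then shows that replacing any such deviation by the action prescribed by $\pi$ can only decrease expected cost, so $\pi$ is optimal at the first visit to $u$.

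Third, for the suboptimality of backtracking at $v$, I would compare the $L$-cost shortcut $(v,t)$, which completes the route at cost exactly $L$, against any policy that leaves $v$ by retracing the gadget path. Retracing costs a certain $L$ to reach $u$; from $u$ the only options are to re-enter $g$ via $(u,v_1)$, reaching $v$ with all $N$ intermediate shortcuts already known blocked and thus forcing at least another $L$ to reach $t$, or to avoid $(u,v_1)$, incurring expected cost at least $1$ by the external assumption. In either case the total cost is at least $L+1>L$, strictly greater than the shortcut, so backtracking is suboptimal.

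The main technical hurdle I expect is making the within-gadget case analysis in the second step fully airtight, since a policy may interleave forward traversal with multiple backtracks and short excursions outside $g$ through $u$ or $v$ before reattempting. Handling this cleanly amounts to observing that each edge of the length-$L$ path costs $L/(N+1)>0$, that re-visiting a vertex yields no new information about shortcut states, and that any excursion outside $g$ from $u$ without taking $(u,v_1)$ adds at least $1$ to the expected cost by assumption; these combine to show that every alternative policy can be transformed into $\pi$ by a sequence of non-increasing exchanges, establishing optimality.
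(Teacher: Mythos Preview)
Your first and third steps are fine and match the paper's treatment (the paper organizes the expectation as $\frac{L}{N+1}\sum_{i=0}^{N}2^{-i}+2^{-N}K$ rather than via the stopping time $T$, but the two computations are equivalent, and both yield $C(\pi)<3/4<1$; your argument at $v$ is in fact more explicit than the paper's one-line remark that $K\le L$).

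The gap is in your second step. The assertion that backtracking is ``purely wasteful, since re-visiting a vertex reveals no new information'' only covers the case where the policy backtracks and then eventually returns forward to the same vertex. It does \emph{not} cover the live alternative in which the policy, upon finding the shortcut at $v_j$ blocked, retreats all the way to $u$ and then exits the gadget (either via the cost-$L$ shortcut $(u,t)$ or through the external graph at cost $\ge 1$). That alternative is not a priori dominated by continuing forward: from $v_j$ the forward-continuation cost can be as large as $(N+1-j)\frac{L}{N+1}+L$, which for $j$ near $N$ exceeds $jL/(N+1)+1$ if one ignores the remaining shortcuts. So a genuine comparison is needed, and your exchange sketch does not supply it. The paper handles exactly this point by singling out, for each $j\le N$, the policy $\pi_j$ that follows $\pi$ to $v_j$ and then, if the shortcut there is blocked, retreats to $u$ and pays $M_j\ge 1$; it then proves the explicit inequality
\[
C(\pi)=\frac{2L}{N+1}\bigl(1-2^{-(N+1)}\bigr)+2^{-N}K \;<\; \frac{2L}{N+1}\bigl(1-2^{-j}\bigr)+\frac{2^{-j}jL}{N+1}+2^{-j}M_j = C(\pi_j),
\]
reducing it (using $K\le L$, $M_j\ge 1$, $N+1\ge 4L$) to the elementary one-variable inequality $\tfrac{1}{4}(2-j)+2^{j-N}L<1$ for $1\le j\le N$. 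This calculation is the substantive content your exchange argument is missing; once you add it (or equivalently run backward induction on the optimal cost $W_j$ from $v_j$ and verify $L/(N+1)+\tfrac12 W_{j+1}<jL/(N+1)+1$ for every $j$), your proof goes through.
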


Note that $g$ is actually symmetric w.r.t. $u,v$. However, since by construction of the CTP-graph,
every reasonable policy always reaches one designated terminal $u$
first, we treat $g$ externally as if it were directional. A precise derivation of
the parameters of baiting gadgets appears in \ref{app:ApA1}.

\subsection{Observation Gadgets}\label{sec:OG}

An \textit{observation gadget} $g=OG(u,v,o)$ is a four-terminal weighted graph (see Fig. \ref{fig:OGsubfig2}): 
an entry terminal $u=Entry(g)$, an exit terminal $v=Exit(g)$, an observation terminal $o$,
and a shortcut terminal (again omitted in external
references) which is always $t$.
The observation gadget begins with a baiting gadget $BG_1=BG(u,v_1)$ with parameter $L>8$, which is connected
to the ``observation loop'' beginning with a baiting gadget $BG_2=BG(v_1, v_2)$ 
with parameter $3L/2$, a zero-cost edge
$(v_2,v_3)$ with blocking probability $3/4$, and a cost $L_1=5L/8$ unblocked edge to $o$. 
An always unblocked $3L/2$ shortcut edge $(v_2,t)$ is assumed.
The observation loop is closed by a cost $L_1$ unblocked edge to $v_4$ and a zero-cost
edge $(v_4, v_1)$ with blocking probability $3/4$. From $v_1$, a cost $1$ unblocked edge  $(v_1,v')$ followed
by a baiting gadget $BG_3=BG(v'_1, u)$ with parameter $L$ completes the gadget.

\begin{figure}[ht]
    \centering
    \includegraphics[scale=1.3]{../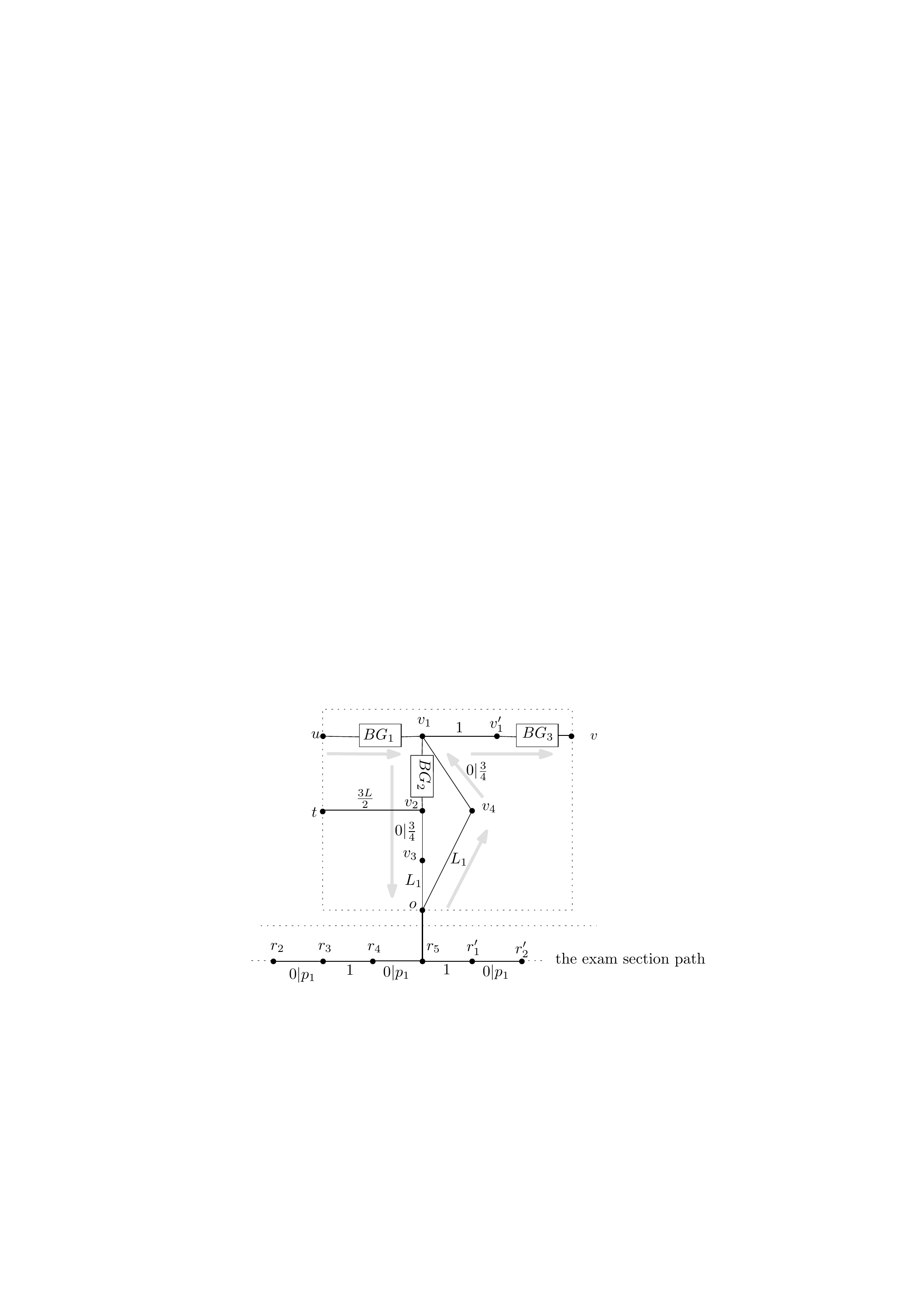}
    \caption{An Observation Gadget $OG(u,v,o)$. Light gray arrows indicate general traversal direction of the optimal policy $\pi$.}
    \label{fig:OGsubfig2}
\end{figure}

We assume that $o$ is either not directly connected to the rest of the graph,
or connected through a path $(r_2,r_3,r_4,r_5,r'_1,r'_2)$ called the \textit{exam section path} ($o$ is identified with $r_5$) with the following properties: 
the edges $(r_2,r_3)$, $(r'_1,r'_2)$ and $(r_4,r_5)$, have zero cost and blocking probability $p_1$, where $p_1>1-2/(3L+1)$. 
$(r_2,r_3)$ and  $(r'_1,r'_2)$ are called \textit{guard edges}, $(r_4,r_5)$ is called an \textit{observation edge}. 
The edges $(r_3,r_4)$ and $(r_5,r'_1)$ are always traversable edges with cost 1. 
The vertex $o$ is allowed to coincide with observation terminals of other observation gadgets.
The notations of the exam section path are chosen to match the description of the CTP-graph construction in Section \ref{sec:CTP-cons}.

Let $\pi_g$ be the following partial policy at $g$: \textit{when at $u$, cross $BG_1$.
Then (observing $(v_1,v_4)$), cross $BG_2$. If either $(v_2,v_3)$, or $(v_1,v_4)$ is found blocked,
reach $t$ by traversing the cost $3L/2$ shortcut edge $(v_2,t)$.
However, if both $(v_2,v_3)$ and $(v_1,v_4)$ are unblocked, 
traverse $(v_2,v_3,o,v_4,v_1,v'_1)$ (observing any edges incident on $o$ such as the observation edge $(r_4,r_5)$),
and cross $BG_3$. }

Again, by construction of the CTP-graph (section \ref{sec:CTP-cons}),
any policy at $u$ other than crossing $BG_1$ results in a cost of at least $1$.

\begin{claim}\label{clm:OGclaim}
When at $u$ for the first time, $\pi_g$ is an optimal policy for $g$.
\end{claim}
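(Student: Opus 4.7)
The plan is to walk through the decision points of $\pi_g$ in order---$u$, $v_1$, $v_2$, the loop itself, and $v'_1$---and argue at each that $\pi_g$'s prescribed action is weakly optimal. Every local argument will rest on Claim~\ref{clm:BGcLaim} (applied separately to $BG_1$, $BG_2$, $BG_3$) together with the standing assumption that any policy at $u$ that does not enter $BG_1$ costs at least $1$, which is discharged by the CTP-graph construction in Section~\ref{sec:CTP-cons}.

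At $u$ the argument is immediate from Claim~\ref{clm:BGcLaim}: crossing $BG_1$ has expected cost strictly below $1$, while any alternative is bounded below by $1$ by hypothesis. At $v_1 = Exit(BG_1)$, after observing $(v_1,v_4)$, I would consider each candidate deviation in turn---backtracking through $BG_1$ (cost at least $L$ by Claim~\ref{clm:BGcLaim}), leaving via $(v_1, v'_1)$ toward $BG_3$, and taking any cost-$L$ or cost-$3L/2$ shortcut edge that a baiting-gadget terminal exposes at $v_1$---and show that each is dominated by committing to cross $BG_2$, after accounting both for the $3L/2$ shortcut in the likely loop-fails case and for the observations at $o$ in the $1/16$-probability loop-closing case. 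At $v_2 = Exit(BG_2)$ I would split on the already-revealed states of $(v_1,v_4)$ and $(v_2,v_3)$: in the three cases where the loop is broken, the $(v_2,t)$ shortcut of cost $3L/2$ is best because any backtrack through $BG_2$ also costs at least $3L/2$ by Claim~\ref{clm:BGcLaim}, and any partial traversal of the loop forces paying the two cost-$L_1 = 5L/8$ edges incident on $o$ plus a backtrack; in the loop-closing case the traversal $v_2 \to v_3 \to o \to v_4 \to v_1 \to v'_1$ of cost $2L_1 + 1 = 5L/4 + 1$, followed by $BG_3$, has total expected cost less than $5L/4 + 2$, strictly below $3L/2$ precisely because $L > 8$. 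At $v'_1$, Claim~\ref{clm:BGcLaim} again forces crossing $BG_3$, and any mid-gadget deviation such as backtracking inside a baiting gadget or prematurely jumping to $t$ along the path is ruled out by the ``never backtrack'' portion of Claim~\ref{clm:BGcLaim}.

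The hard part will be the analysis at $v_1$ after $(v_1,v_4)$ is observed blocked: one must rule out ``giving up early'' via whichever terminal shortcut to $t$ is accessible at $v_1$. This will require using the parameter regime ($L > 8$ and blocking probability $3/4$ on the loop edges) carefully to compare the expected cost of completing $\pi_g$ (dominated by the $3L/2$ shortcut in the likely case) against the cost of an immediate shortcut, and will also require invoking the external CTP-graph assumption to bound the cost of any policy that leaves $g$ without following $\pi_g$. A secondary subtlety will be the $v_2$ branch where $(v_2,v_3)$ is unblocked while $(v_1,v_4)$ is blocked, where the temptation to detour through $v_3$ to $o$ and back must be shown to cost strictly more than the $3L/2$ shortcut, using the fact that the only way back from $v_4$ once $(v_4,v_1)$ is blocked is to retrace the two cost-$L_1$ edges.
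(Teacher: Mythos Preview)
Your proposal has a genuine gap: you never analyze the decision point at $o$ itself. The observation terminal $o$ is not a dead end inside $g$---by construction it is identified with a vertex $r_5$ on the exam-section path and may coincide with the observation terminals of other observation gadgets. Once the agent is at $o$ (in your loop-closing case, or in the $(v_2,v_3)$-unblocked, $(v_1,v_4)$-blocked case after paying $L_1$ to reach $o$), it can attempt to leave $g$ entirely by (i) stepping into the exam section via $(r_5,r'_1)$ or, if $(r_4,r_5)$ is unblocked, via $(r_5,r_4)$, or (ii) entering a neighboring observation gadget $\widetilde{g}$ through its $\widetilde{v}_3$ or $\widetilde{v}_4$. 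Ruling out (i) is precisely what the guard edges with blocking probability $p_1 > 1 - 2/(3L+1)$ are there for, and ruling out (ii) is precisely what the $3/4$ blocking probability on $(\widetilde{v}_2,\widetilde{v}_3)$ and $(\widetilde{v}_4,\widetilde{v}_1)$ is there for; these two computations (cases 1.b, 1.c, 2.b, 2.c in the paper) are the substantive core of the proof, and your plan does not mention them. Your assertion that ``the only way back from $v_4$ once $(v_4,v_1)$ is blocked is to retrace the two cost-$L_1$ edges'' is false exactly because of these external connections at $o$.

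Separately, the step you flag as the ``hard part''---the decision at $v_1$ after observing $(v_1,v_4)$ blocked---is in fact immediate. By Claim~\ref{clm:BGcLaim} the expected cost of crossing $BG_2$ from $v_1$ is strictly below $1$, and this already absorbs the $3L/2$ shortcut at $v_2$ weighted by the tiny probability $2^{-N_1}$ of ever reaching $v_2$. Every alternative at $v_1$---the cost-$L$ terminal shortcut, the cost-$1$ edge $(v_1,v'_1)$, or the cost-$L_1$ edge $(v_4,o)$---costs at least $1$, so Claim~\ref{clm:BGcLaim} applies directly. Your phrase ``dominated by the $3L/2$ shortcut in the likely case'' suggests you are forgetting that crossing a baiting gadget almost always terminates at $t$ via a free shortcut rather than at its exit terminal.
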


\inlineproofenvironment[Proof Outline]\ Properties of the baiting gadgets ensure that $g$ is traversed in the
correct order. The guard edges $(r_2,r_3)$ and $(r'_1,r'_2)$ ensure that it is suboptimal to ``escape'' from $o$ by traversing
edges in the exam section.
The uncertain edges $(v_4,v_1)$ and $(v_2, v_3)$ ensure that it is suboptimal to
enter a previously uncrossed observation gadget from $o$. 
Likewise for a previously crossed observation
gadget $g'$: entering $g'$ through $o$ is suboptimal because all the baiting gadgets in $g'$
have been crossed and observed to
contain no unblocked zero-cost shortcuts. 

\vspace{2mm}

Detailed derivation of the properties of observation gadgets appears in \ref{app:ApA2}.

\subsection{CTP-graph Construction}\label{sec:CTP-cons}

\begin{figure}[!h]
 
   \scalebox{1}{\includegraphics[width=1.0\textwidth,height=0.5\textheight]{../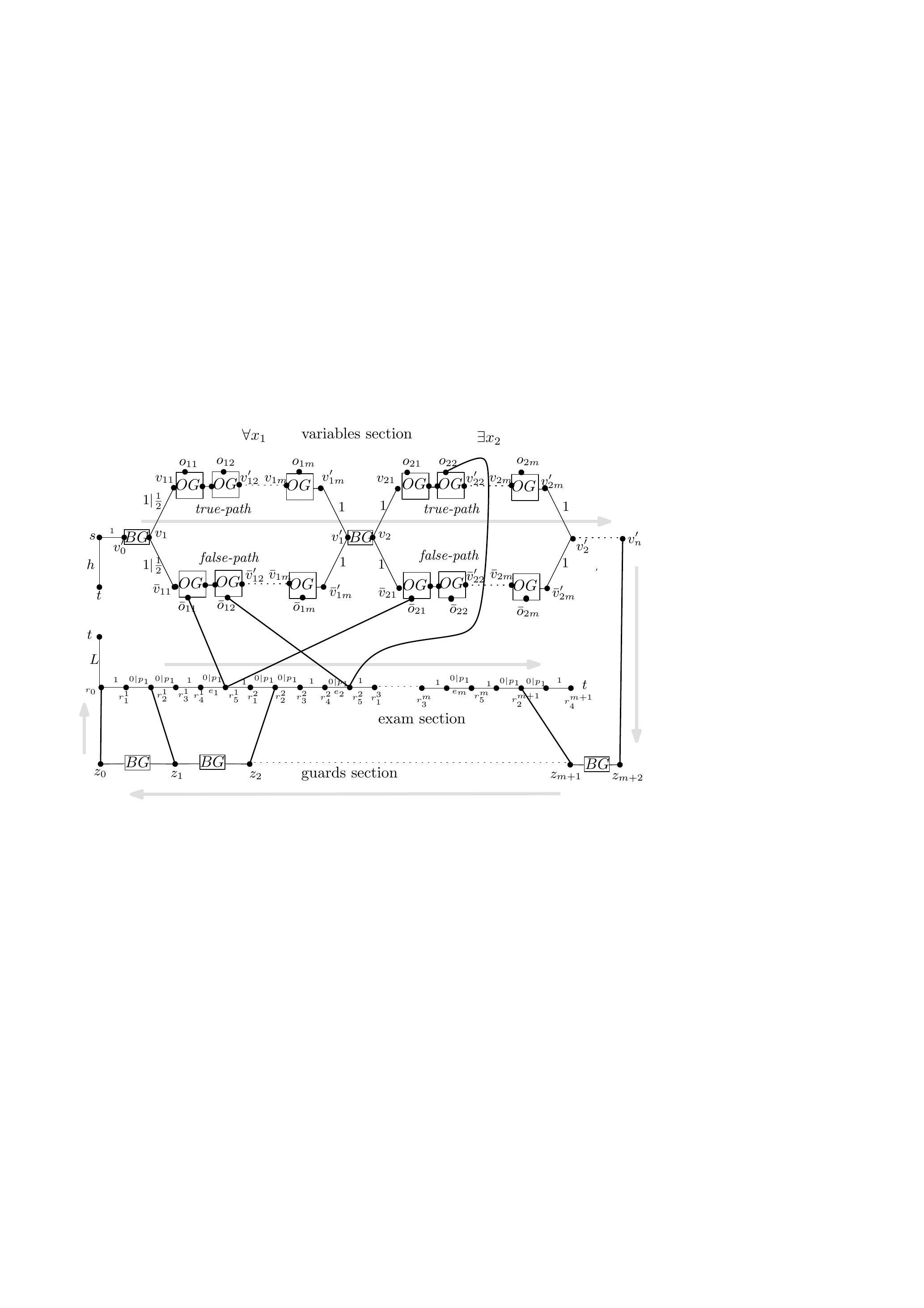}}
    \caption{CTP-graph construction for $\Phi = \forall x_1\exists x_2\cdots(\bar{x}_1\vee\bar{x}_2)\wedge (\bar{x}_1\vee x_2)\cdots$. $BG$ - a baiting gadget. $OG$ - an observation gadget. Light gray arrows indicate the general traversal direction of the optimal policy when $\Phi$ is satisfiable.}
    \label{fig:CTPReduction}
\end{figure}

Having shown the properties of the baiting and observation gadgets, we are ready to construct the CTP-graph:
For a QBF $\Phi$ with $n$ variables and $m$ clauses, we construct 
$G_\Phi$ in the same general outline as the construction of the CTP-Dep graph (see section \ref{sec:CTP-Dep}) 
with the following changes (see Fig.\ \ref{fig:CTPReduction}). 
The exam section is a path of $5(m+1)+1$ vertices $\{r^i_j|1\leq i\leq m+1,1\leq j\leq 5\}$
as follows. 
For every $0<i\leq m+1$, $(r^i_1,r^i_2)$, $(r^i_2,r^i_3)$ and $(r^i_4,r^i_5)$ have zero
cost and blocking probability $p_1$, apart from $(r^{m+1}_4,r^{m+1}_5)$ which has zero cost and is always traversable. 
$(r^i_1,r^i_2)$, and $(r^i_2,r^i_3)$ are called \textit{guard edges},
and $(r^i_4,r^i_5)$ is called a \textit{clause edge}.
$(r^i_3,r^i_4)$ and  $(r^i_5,r^{i+1}_1)$ are always traversable cost $1$ edges. In addition, the exam section holds an additional vertex $r_0$, an always traversable cost $1$ edge $(r_0,r^1_1)$, and an always traversable cost $L$ edge $(r_0,t)$.
In order to guarantee correct operation of the observation gadgets,
we disallow reasonable policies to traverse exam edges too early while crossing the variable section. This is done
by visiting the initially uncertain guard edges only later via a section called the \textit{guards section}, which consists
of a sequence of baiting gadgets $BG(z_i,z_{i+1})$ with parameter $L$ that
visit $r^i_2$ for all $0\leq i \leq m+1$. 

The variables section is constructed as for CTP-Dep, 
except that the directed edges $(v'_i,v_{i+1})$ are replaced by
baiting gadgets $BG(v'_i,v_{i+1})$ with parameter $L$. 
For each universal variable $x_i$ the 
universal edges $(v_i,v_{i1})$, and $(v_i,\bar{v}_{i1})$ are cost $1$ edges with 
blocking probability $1/2$. For each existential variable $x_i$,
the existential edges $(v_i,v_{i1})$, and $(v_i,\bar{v}_{i1})$ are always traversable edges with cost $1$. 
Inside each \textit{true-path}, every $(v_{ij},o_{ij})$, 
$(v_{ij},v_{i(j+1)})$ pair is replaced by an observation gadget $g=OG(v_{ij},v'_{ij},o_{ij})$. $(v'_{ij},v_{i(j+1)})$ 
are always unblocked zero-cost edges added for clarity.
The observation vertex $o_{ij}$ is identified with
the vertex incident on the appropriate clause edge in
the exam section. 
That is, if $x_i$ appears unnegated in clause $j$, then
$o_{i_j}$ of the \textit{true-path} is identified with $r^j_5$ in the exam section.
Likewise respectively for all the edges in the \textit{false-paths}.

For example, Fig.\ \ref{fig:CTPReduction} demonstrates the reduction for
$\Phi = \forall x_1\exists x_2\cdots(\bar{x}_1\vee\bar{x}_2)\wedge (\bar{x}_1\vee x_2)\cdots$.
The variable $x_1$ appears negated in clause $2$, so in $G_\Phi$ the vertex $\bar{o}_{12}$ at the section $X_1$, and the vertex $r_5^2$ of the exam section are connected by an unlabeled edge, hence the clause edge $e_2=(r_4^2,r_5^2)$ can be observed from the observation gadget $OG(\bar{v}_{12},\bar{v}'_{12},\bar{o}_{12})$ when traversing the \textit{false} path of $X_1$. Likewise, the connection of 
other observation gadgets can be explained similarly.

\subsection{Proof of Theorem \ref{thm:CTPtheorem}}\label{sec:ThmProof}

Given a QBF $\Phi$ with $n$ variables and $m$ clauses, we construct 
$G_\Phi$ as in Section \ref{sec:CTP-cons}. Set $L=8m+16$ and 
$p_1=1-2^{-\left\lceil \log_2(\frac{3L+1}{2})\right\rceil}$. 
We show that it is optimal to traverse $(s,v_0)$ if and only if $\Phi$ is satisfiable.

Unless stated otherwise, we henceforth consider only reasonable policies for $G_\Phi$ that
do not begin with the default action of traversing $(s,t)$.
Due to properties of the gadgets (claims \ref{clm:BGcLaim}, \ref{clm:OGclaim}) any reasonable
policy $\pi$ for $G_\Phi$ must follow the restrictions in Table \ref{tab:policy}, as any other action is suboptimal. 

\begin{table*}[!h]
	\centering
	  \caption{Reasonable policy actions in $\pi$}
		\begin{tabular}{|l |r|}\hline
Location           & Action \\ \hline	
$v'_i$, for $i<n$   & cross $BG(v'_i,v_{1+1})$  \\ \hline	 
$v_i$, for $i\leq n$, & go to $v_{i1}$ or $\bar{v}_{i1}$  \\ \hline  
$v_{il}$, for $i\leq n$,   & cross $OG(v_{il},v'_{il},o_{il})$  \\ \hline  
 $\bar{v}_{il}$ for $i\leq n$, &  cross $OG(\bar{v}_{il},\bar{v}'_{il},\bar{o}_{il})$           \\ \hline	
$z_l$, for $0<l\leq m+2$   & cross $BG(z_l,z_{l-1})$  \\ \hline	
$r_0$   &   pass exam or take shortcut         \\ \hline

		\end{tabular}
	
	\label{tab:policy}
\end{table*}

Most of these restrictions are immediate consequences of executing optimal policies at the baiting and observation gadgets (see \ref{app:ApA1} and \ref{app:ApA2} for detail). The following claim, proved in \ref{app:ApB1}, shows the actions of any reasonable policy for $G_\Phi$ at $r_0$.

\begin{claim}\label{clm:r-pol}
At $r_0$, any reasonable policy acts as follows:

\begin{itemize}
\item If all the edges in the exam section were observed to be unblocked, cross $(r_0,r^1_1,\cdots,r^{m+1}_4,t)$ until reaching $t$ for a cost of $2(m+1)$. 
\item Otherwise, cross the cost $L$ shortcut edge $(r_0,t)$.
\end{itemize}
\end{claim}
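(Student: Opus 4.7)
The plan is to enumerate the non-dominated first moves at $r_0$ and compare them in each information state. Because $r_0$ has exactly three incident edges---the exam-entry edge $(r_0, r^1_1)$ of cost $1$, the shortcut $(r_0, t)$ of cost $L$, and a return edge into the last baiting gadget of the guards section---and because Claim~\ref{clm:BGcLaim} rules out retracing the baiting gadget, the only reasonable first moves at $r_0$ are to take the shortcut (total cost $L$) or to step into the exam path. Note that after the guards section every guard edge $(r^i_1, r^i_2)$ and $(r^i_2, r^i_3)$ has been observed, so the only exam edges that can still be uncertain at $r_0$ are the clause edges $(r^i_4, r^i_5)$, and their statuses depend on whether the corresponding $o_{ij}$ was visited via an observation gadget.

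If every exam edge is observed unblocked at $r_0$, the path $(r_0, r^1_1, \ldots, r^{m+1}_4, t)$ is deterministically traversable, and its cost equals the total weight of the $m+2$ cost-$1$ edges along it, namely $2(m+1)$. Since $L = 8m+16 > 2(m+1)$, the exam strictly beats the shortcut, proving the first bullet. For the second bullet, I would split on whether an exam edge is observed blocked or still unobserved. If some edge is observed blocked, no traversal of the exam can succeed, and the shortcut is the cheapest remaining route to $t$, since every alternative either re-enters a baiting or observation gadget whose zero-cost shortcuts are already known to be blocked, or invokes an internal gadget shortcut of cost $\geq 3L/2$.

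Otherwise, let $k \geq 1$ be the number of still-unobserved clause edges; each is independently blocked with probability $p_1 \geq 1 - 2/(3L+1)$. Any policy that first enters the exam either (i) completes it, with probability at most $q^k$ where $q = 1 - p_1$, at cost $\geq 2(m+1)$, or (ii) discovers that the nearest unobserved edge $(r^{i_0}_4, r^{i_0}_5)$ is blocked after paying forward cost $\geq 2 i_0 \geq 2$ along the $m+2$ cost-$1$ edges of the exam path, then pays $\geq 2 i_0 \geq 2$ backtracking to $r_0$ and $L$ for the shortcut, for a total of at least $L + 4$. Hence
\[
E[\text{exam attempt}] \;\geq\; q^k \cdot 2(m+1) + (1 - q^k)(L + 4) \;=\; L + 4 - q^k\bigl(L - 2m + 2\bigr).
\]
Using $q^k \leq q \leq 2/(3L+1)$ and $L = 8m+16$, the correction satisfies $q^k(L - 2m + 2) \leq 2(L - 2m + 2)/(3L+1) < 1$, so $E[\text{exam attempt}] > L + 3 > L$, and the shortcut strictly dominates.

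The main obstacle will be rigorously bounding the failure branch against arbitrary adaptive policies---in particular, verifying that no detour through an observation gadget's terminal $o = r^i_5$ yields a cheaper recovery to $t$, and that partial traversals followed by early aborts cannot do better than the shortcut bound. Both of these reduce to the already-established properties of the baiting and observation gadgets, which together guarantee that any escape from the exam section other than through $(r_0, t)$ requires traversing weight $\geq L$.
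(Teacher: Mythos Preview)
Your approach is essentially the paper's: rule out retracing the guards-section baiting gadget via Claim~\ref{clm:BGcLaim}, compare $2(m+1)<L$ when the exam is fully observed unblocked, and argue that entering a partially unknown or blocked exam section has expected cost exceeding $L$, deferring the ``no cheap escape at $r^l_2$ or $r^l_5$'' verification to the previously established gadget properties. Two small points: the exam path contains $2(m+1)$ cost-$1$ edges, not $m+2$ (your stated total $2(m+1)$ is nonetheless correct), and the paper disposes of the unknown-edge case more simply by bounding only the \emph{first} uncertain clause edge $e_l$ via $1+p_1(1+L)>L$, rather than carrying your full $k$-edge computation.
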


\vspace{3mm}

Therefore, reasonable policies for $G_\Phi$ differ only in the choices made in the universal and existential edges,
and in the choice at $r_0$ which is either to traverse the exam section if
all clause edges were observed, or otherwise take the expensive shortcut $(r_0,t)$.

\vspace{3mm}

Now let $\pi$ be a reasonable policy for $G_\Phi$, and denote the expected 
cost of $\pi$ by $C(\pi)$. Define a \textit{weather} to be an assignment of $\{traversable,blocked\}$ to the edges of $G_\Phi$.
Let $W$ be the set of all possible weathers for $G_\Phi$, and for $w\in W$ let
$p_w$ be the probability that weather $w$ occurs. 
Define $C(\pi,w)$ to be the cost of executing $\pi$ over a 
weather $w$. Then

\begin{equation}
C(\pi)=\sum_{w\in W}p_wC(\pi,w)
\end{equation}

 Next, partition $W$ into 
\textit{full-trip weathers} $W^f(\pi)$, in which $r_0$ is reached while
executing $\pi$; and \textit{shortcut weathers} $W^s(\pi)$ in which $r_0$ is not reached due to taking a shortcut edge to $t$
before reaching $r_0$.
Then:

\begin{equation}
C(\pi)=\sum_{w\in W^s(\pi)}p_wC(\pi,w)+\sum_{w\in W^f(\pi)}p_wC(\pi,w)
\end{equation}

\noindent Let $\pi^T$ be a policy for $G_\Phi$ such that in every 
subsection $X_i$ of the variables section, whenever possible, the \textit{true-path} is always chosen. Define: 

\begin{equation}
D_{st}=\sum_{w\in W^s(\pi^T)}p_wC(\pi^T,w)
\end{equation}

\noindent As all the \textit{true-paths}, and \textit{false-paths} of all the variables section are 
symmetric in the number of observation gadgets and other edges,
there is a bijection $g_{\pi}:W^s(\pi)\to W^s(\pi^T)$ such that 
$p_w=p_{g_{\pi}(w)}$ and $C(\pi,w)=C(\pi^T,g_{\pi}(w))$
for every $w\in W^s(\pi)$.
Hence we have:

$$D_{st}=\sum_{w\in W^s(\pi)}p_wC(\pi,w)$$ 

and

\begin{equation}
C(\pi)=D_{st}+\sum_{w\in W^f(\pi)}p_wC(\pi,w)
\end{equation}

\noindent 
Again, due to symmetry, and the properties of the baiting and observation gadgets (claims \ref{clm:BGcLaim}, \ref{clm:OGclaim}),
the total cost from $s$ to $r_0$ while executing $\pi$ in 
any weather $w\in W^f(\pi)$ is independent of $w$. We denote this cost by $D_{pt}$, and can compute it simply by summing the cost
of traversing from $s$ to $r_0$ through the variable section and guard section, assuming that $r_0$ is reached. Then we get:

\begin{equation}
D_{pt}=1+(2+\frac{19mL+4}{4})n+(n+m+1)L
\end{equation}

\noindent Then from $r_0$ to $t$ the cost is either $2(m+1)$ (if the exam section is known to
be completely unblocked), 
or $L>2(m+1)$ (taking the shortcut $(r_0,t)$, if some edges in the exam section are known to be blocked, or some
such unknown edges remain).
Hence for any full-trip weather $w$, $C(\pi,w)$ is either $D_{pt}+L$, or $D_{pt}+2(m+1)$.

Let $P^\pi_\Phi \in [0,1]$ be the probability that \textbf{not} all the clause edges of the exam section 
were observed in a full-trip weather by following $\pi$
(this probability depends on the formula $\Phi$).
Then, with probability $(1-P^\pi_\Phi)(1-p_1)^{3m+2}$ all the edges of the exam section were 
observed and were found unblocked before reaching $r_0$.
Let $P_{rt}=(1-p_1)^{3m+2}$ be the probability that all the edges in the exam section are unblocked
and denote by $P_{r_0}$ the probability of
reaching $r_0$ by executing $\pi$. 
Again, due to symmetry of the baiting and observation gadgets, $P_{r_0}$ is independent of $\pi$. 
We get:

\begin{equation}
\sum_{w\in W^f(\pi)}p_wC(\pi,w)=P_{r_0}(D_{pt}+P^\pi_\Phi L+(1-P^\pi_\Phi)(P_{rt}2(m+1)+(1-P_{rt})L)
\end{equation}

And therefore

\begin{equation}
C(\pi)=D_{st}+P_{r_0}(D_{pt}+P^\pi_\Phi L+(1-P^\pi_\Phi)(P_{rt}2(m+1)+(1-P_{rt})L))
\end{equation}

If $\Phi$ is satisfiable, then, as in the proof of Theorem \ref{thm:CTPDep}, there
is a reasonable policy $\pi$ which follows the 
assignments that satisfy $\Phi$, thus every clause edge is observed, 
and $P^\pi_\Phi=0$. Define $B_0=C(\pi)$ for such a policy $\pi$ when $\Phi$ is satisfiable. Then

\begin{equation}
B_0 = D_{st}+P_{r_0}(D_{pt}+P_{rt}2(m+1)+(1-P_{rt})L)
\end{equation}

If $\Phi$ is not satisfiable, then at some universal subsection $X_i$ of the variables section, 
there is a probability of at least $1/4$ that a universal edge must be traversed, 
such that upon reaching $r_0$, not all the clause edges are visited. 
Hence, in total, there is a probability of at least 
$(\frac{1}{4})^{\frac{n}{2}}$ that not all the clause edges are visited. 
Note that as $P_{r_0}$ already excludes events where both universal edges are blocked for
some variable, if $\Phi$ is not satisfiable, then for every reasonable policy $\pi$, $P^\pi_\Phi >(\frac{1}{3})^{\frac{n}{2}}$.
Hence define $B_1$ as follows.

\begin{equation}
B_1=D_{st}+P_{r_0}(D_{pt}+(\frac{1}{3})^{\frac{n}{2}}L+(1-(\frac{1}{3})^{\frac{n}{2}})(P_{rt}2(m+1)+(1-P_{rt})L))
\end{equation}

Then $B_1>B_0$, and if $\Phi$ is not satisfiable, then $C(\pi)\geq B_1$.
Now let $h=w((s,t))=B_0+(\frac{1}{4})^{\frac{n}{2}}mP_{r_0}$,
so that $B_1 > h > B_0$. Thus the optimal action at $s$  is
to traverse $(s,t)$ if and only if $\Phi$ is unsatisfiable. 
Since the CTP-graph construction used a polynomial number of vertices,
and all the weights and probabilities by construction need only a polynomial number of bits
(see \ref{app:ApB2} for the technical computation of $h$), Theorem \ref{thm:CTPtheorem} follows.
\qed
\vspace{2mm}

Several corollaries follow due to the construction of $G_\Phi$:

\begin{corollary}
It is PSPACE-hard to determine the expected cost of the optimal policy in CTP.
\end{corollary}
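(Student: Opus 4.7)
The plan is to observe that the construction $G_\Phi$ used in the proof of Theorem \ref{thm:CTPtheorem} already determines the expected cost of the optimal policy as a function of the satisfiability of $\Phi$, so that knowing this cost is enough to decide QSAT. Since QSAT is PSPACE-hard and the reduction is polynomial, PSPACE-hardness of computing the optimal expected cost follows immediately.

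More concretely, I would proceed as follows. First, reuse the exact reduction from a QBF $\Phi$ to the CTP instance $G_\Phi$, with the same parameter choices $L = 8m+16$, $p_1 = 1 - 2^{-\lceil \log_2((3L+1)/2)\rceil}$, and default-edge weight $h = B_0 + (1/4)^{n/2} m P_{r_0}$. From the proof of Theorem \ref{thm:CTPtheorem} we have: if $\Phi$ is satisfiable, then the optimal policy attains expected cost exactly $B_0$ (by following the satisfying assignment through the variables section and then passing the exam), while if $\Phi$ is unsatisfiable, every non-default reasonable policy has expected cost at least $B_1 > h$, so the optimal policy is to take the default edge $(s,t)$ with expected cost exactly $h$. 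Thus the optimal expected cost is $B_0$ in the satisfiable case and $h$ in the unsatisfiable case, and these two values are distinct by the choice $h > B_0$.

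Second, note that $B_0$ and $h$ are fully determined by the construction of $G_\Phi$ (they depend only on $n$, $m$, $L$, $p_1$, and $P_{r_0}$), and they can be computed in polynomial time in the size of $\Phi$, using a polynomial number of bits (see \ref{app:ApB2}). Hence, given the optimal expected cost $C^*$ of $G_\Phi$, one can decide in polynomial time whether $C^* = B_0$ (i.e.\ $\Phi$ is satisfiable) or $C^* = h$ (i.e.\ $\Phi$ is unsatisfiable). Composing the polynomial-time reduction $\Phi \mapsto G_\Phi$ with this polynomial-time post-processing yields a polynomial-time reduction from QSAT to the problem of determining the optimal expected cost in CTP, proving PSPACE-hardness.

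The main obstacle is essentially nothing new: everything heavy has already been done in Theorem \ref{thm:CTPtheorem}. The only point worth being careful about is confirming that both candidate values $B_0$ and $h$ are efficiently computable from the instance data (so that the reduction really is polynomial-time), which follows from the explicit closed forms for $D_{st}$, $D_{pt}$, $P_{r_0}$, $P_{rt}$ and the bit-complexity bound already established in \ref{app:ApB2}.
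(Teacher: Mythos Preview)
Your proposal is correct and takes essentially the same approach as the paper: the paper states the corollary as an immediate consequence of the construction of $G_\Phi$ without spelling out a separate proof, and what you have written is exactly the natural unpacking of that claim---the optimal expected cost equals $B_0$ when $\Phi$ is satisfiable and $h$ otherwise, both values are polynomially computable, and distinguishing them decides QSAT.
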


By replacing all the edges with appropriately directed edges, we get:

\begin{corollary}
CTP with directed edges but no dependencies is PSPACE-complete.
\end{corollary}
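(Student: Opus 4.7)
The plan is to reuse the construction $G_\Phi$ from Section \ref{sec:CTP-cons} essentially verbatim, but to orient every edge along the direction in which a reasonable policy would traverse it (as indicated by the light gray arrows in Figures \ref{fig:BGsubfig1}, \ref{fig:OGsubfig2}, and \ref{fig:CTPReduction}). Since Theorem \ref{thm:CTPtheorem} already gives PSPACE-hardness of the undirected independent version, it suffices to show that this directed variant admits an analogous hardness reduction; membership in PSPACE is immediate from the argument of \cite{PY}, which does not rely on edges being undirected.

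First I would specify the orientation. Inside each baiting gadget $BG(u,v)$, direct the path edges as $u\to v_1\to\cdots\to v_N\to v$, direct every 0-cost shortcut as $v_i\to t$, and direct the two cost-$L$ shortcut edges as $u\to t$ and $v\to t$. Inside each observation gadget $OG(u,v,o)$, orient the internal baiting gadgets as above and orient the observation loop in the direction used by $\pi_g$, namely $v_1\to v_2\to v_3\to o\to v_4\to v_1\to v_1'$, with the $3L/2$ shortcut directed as $v_2\to t$. In the variables section, orient the universal and existential edges away from each $v_i$, the internal edges of each path away from $v_i$ towards $v_i'$, and the connecting baiting gadgets from $v_i'$ to $v_{i+1}$. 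In the exam section, orient every edge along the path from $r_0$ to $t$ and orient the cost-$L$ shortcut as $r_0\to t$. The guard section is oriented along its sequence of baiting gadgets. Finally the default edge is directed $s\to t$, and the edge from $s$ into the variables section $s\to v_1$. Edges from observation terminals $o_{ij}$ into the exam section are directed so that an observation gadget may only feed into, not exit from, the corresponding clause vertex.

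Next I would verify that the reduction still works. The key observation is that Claims \ref{clm:BGcLaim} and \ref{clm:OGclaim}, together with Claim \ref{clm:r-pol} and Table \ref{tab:policy}, explicitly establish that every reasonable policy uses each edge only in the orientation chosen above; all other uses were already shown to be strictly suboptimal. Therefore every reasonable policy $\pi$ in the undirected instance is a feasible policy in the directed instance with identical cost. Conversely, every policy in the directed instance is a fortiori a (restricted) policy in the undirected instance, so its cost is lower-bounded by the undirected optimum. Hence the directed-instance optimum equals the undirected-instance optimum, and the threshold argument at the end of the proof of Theorem \ref{thm:CTPtheorem} applies without change: the optimal first action is $(s,t)$ iff $\Phi$ is unsatisfiable.

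The main obstacle I anticipate is ensuring that orientations inside observation gadgets do not accidentally make some previously-feasible deviation cheaper in the directed graph than in the undirected graph, which could invalidate the optimality analysis. However, orienting edges can only shrink the policy space, so the directed optimum is at least the undirected optimum; combined with the feasibility of the reasonable policy in the directed graph, the two optima coincide. Since the construction sizes, weights, and probabilities are unchanged, the reduction is still polynomial-time, giving PSPACE-hardness. PSPACE-membership being routine, the directed independent CTP is PSPACE-complete. \qed
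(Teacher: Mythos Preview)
Your proposal is correct and follows exactly the paper's own approach: the paper's entire argument for this corollary is the single line ``By replacing all the edges with appropriately directed edges, we get \ldots'', and you have simply spelled out what ``appropriately'' means and why the expected-cost analysis of Theorem~\ref{thm:CTPtheorem} transfers unchanged. Your monotonicity observation (orienting edges can only shrink the policy space, so the directed optimum is at least the undirected one, while the reasonable optimal policy remains feasible) is the right way to justify the paper's one-liner.
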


Finally, as every unknown edge in this construction of $G_\Phi$ has cost $0$ 
and a probability which is a power of $2$ of being unblocked (the universal edges, for example,
can be split into a two-edge path), 
we can replace every unknown edge with a path of zero-cost, blocking probability  $1/2$
edges and get:

\begin{corollary}
CTP remains PSPACE-complete even if all the unknown edges have zero cost and blocking probability $1/2$.
\end{corollary}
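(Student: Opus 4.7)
The plan is to exhibit a simple gadget-replacement that converts any instance produced by the construction of $G_\Phi$ into an equivalent instance where every uncertain edge has cost $0$ and blocking probability exactly $1/2$, preserving PSPACE-hardness.

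First I would audit the construction of Section \ref{sec:CTP-cons} and verify the parenthetical claim in the corollary: in $G_\Phi$ every uncertain edge already has cost $0$ \emph{except} the universal edges $(v_i,v_{i1})$ and $(v_i,\bar v_{i1})$, which have cost $1$ and blocking probability $1/2$. I would split each such edge into a two-edge path by inserting an auxiliary vertex $u$, taking the first half to be a deterministic always-traversable cost-$1$ edge and the second half to be a cost-$0$ edge with blocking probability $1/2$. This transformation is expected-cost preserving (the deterministic half contributes the same cost $1$ in every weather the original universal edge would have been attempted), and the blocking information becomes available at the same moment in the traversal as before because the auxiliary vertex is only reachable via the deterministic half. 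After this preprocessing every uncertain edge has cost $0$, and by the explicit parameter choices in Sections \ref{sec:BG}, \ref{sec:OG} and \ref{sec:ThmProof}, its unblocked probability is a dyadic fraction of the form $2^{-k}$ for some nonnegative integer $k$ (the values $1/2$, $1/4$, and $1-p_1 = 2^{-\lceil\log_2((3L+1)/2)\rceil}$ that appear in the gadgets all have this form).

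Next I would replace each remaining uncertain cost-$0$ edge $e$ whose unblocked probability is $2^{-k}$ by a path $P_e$ of $k$ cost-$0$ edges, each independently blocked with probability $1/2$. Then $P_e$ is traversable iff every one of its sub-edges is unblocked, which happens with probability $2^{-k}$, matching the original marginal. Because all sub-edges are zero-cost and the intermediate vertices have degree $2$ inside $P_e$, any policy that wants to learn the status of the ``macro-edge'' $e$ can simply walk into $P_e$ until it either reaches the far terminal (in which case all sub-edges were unblocked and $e$ is effectively traversable at zero cost) or encounters a blocked sub-edge (in which case it backtracks to the entry terminal, again at zero cost). Thus the modified instance offers the agent exactly the same information, at exactly the same cost in every weather, as the original one, and optimal expected costs are identical.

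The key step I expect to require care is the equivalence-of-information argument in this last bullet: one must check that the additional partial observations arising from reading sub-edges one at a time cannot help the agent do anything that was not already available in the original CTP instance. The observation to emphasize is that all intermediate vertices of $P_e$ are degree-$2$ (only $P_e$ is incident on them) and every edge of $P_e$ has cost $0$, so no new decision point is actually created: any sequence of actions inside $P_e$ can be summarized as either ``traverse $P_e$'' or ``return unchanged to the entry terminal'', exactly the two options available at the endpoints of the original edge $e$. With this equivalence in hand, the reduction from QBF in Theorem \ref{thm:CTPtheorem} transfers verbatim, and because the substitution only blows the instance up by the sum of the exponents $k$ (which is polynomial in $\log L$ and hence in the input size by the bit-length bounds from \ref{app:ApB2}), the modified instance is still polynomial-size. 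This yields PSPACE-hardness, and membership in PSPACE is unaffected, proving the corollary.
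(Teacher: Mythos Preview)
Your approach is essentially the paper's own: note that every uncertain edge in $G_\Phi$ already has cost $0$ (once the universal edges are split) and dyadic unblocked probability, then replace each such edge by a zero-cost path of probability-$1/2$ edges. One small correction: in your split of a universal edge $(v_i,v_{i1})$ you put the deterministic cost-$1$ half \emph{first} and the uncertain cost-$0$ half second, but then the blocking information is revealed only \emph{after} paying the cost $1$, not ``at the same moment as before'' as you claim (originally both universal edges are observed at $v_i$ before any traversal cost is incurred); simply reverse the order---uncertain cost-$0$ edge incident on $v_i$, deterministic cost-$1$ edge second---and your justification becomes literally correct, matching the paper's intent.
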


\section{Complexity of CTP with remote sensing}\label{sec:sensing}

A somewhat more general version of CTP is {\em Sensing-CTP}. In
this variant, the state of
graph edges can be remotely sensed at any time, paying a known cost.
Formally, Sensing-CTP is defined exactly the same way as stochastic CTP, w.r.t.
the graph, edge-blocking probabilities and weights, and source and target
vertices (see Section \ref{sec:intro}). In addition, a sensing
cost function $SC:V\times E \rightarrow \Re^+$ is given. An edge $e$, not necessarily incident on a vertex $v$,
can be \textit{sensed} for a cost $SC(v,e)$ and as a result, the true state of $e$ is revealed.
The problem in Sensing-CTP is to find a policy which minimizes the total expected 
sensing cost plus travel cost from $s$ to $t$.

CTP is solvable in polynomial time when the
graph consists of edge-disjoint paths which meet only at $s$ and $t$ \cite{BFS}.
This gives rise to the question whether Sensing-CTP is also
tractable for disjoint paths.
We show that this is not the case for Sensing-CTP unless P=NP.
Again, since the size of the policy may be exponential in the size of the
graph, we actually show that it is NP-hard to determine even the first
action in an optimal policy.

\begin{theorem}
Sensing-CTP is NP-hard even in disjoint-path graphs.
\end{theorem}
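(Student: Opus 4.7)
The plan is to exhibit a polynomial-time reduction from a classical NP-hard problem -- the most natural candidate being PARTITION (equivalently SUBSET-SUM) -- to Sensing-CTP restricted to unions of edge-disjoint $s$-$t$ paths. The reduction should produce an instance in which the identity of an optimal first action (sense a particular edge versus begin traversing a particular path) encodes the yes/no answer of the source instance.

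Given a PARTITION instance $a_1,\ldots,a_n$ with $\sum_i a_i = 2S$, I would construct two disjoint $s$-$t$ paths. A \emph{safe} path consists of a single always-unblocked edge of carefully chosen cost $D$. A \emph{risky} path contains $n$ uncertain edges $e_1,\ldots,e_n$, each with blocking probability $1/2$, with sensing cost $SC(s,e_i)$ proportional to $a_i$ (plus a common additive constant), and with traversal weights tuned so that the risky path is strictly cheaper than $D$ when all its edges are open but strictly more expensive than $D$ whenever even one edge is revealed blocked (since a blocked discovery forces detour to the safe path after paying retracing cost). A ``commit-after-sensing'' policy at $s$ is then determined by the subset $T\subseteq\{e_1,\ldots,e_n\}$ it probes first; after probing, either some probed edge is blocked (commit to the safe path) or all are open (attempt the risky path, succeeding with probability $2^{-(n-|T|)}$). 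The expected cost of such a policy takes the form $\Phi(|T|) + \sum_{i\in T} a_i$ for an explicit function $\Phi$, so optimizing over $T$ amounts to a subset-sum minimization.

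The heart of the argument will be two claims in the ``reasonable policy'' style of Section~\ref{sec:ComCTP}: first, an exchange argument showing that any optimal policy at $s$ never interleaves partial traversal with sensing -- if traversing an edge is cheaper than sensing it, sense never happens; otherwise, any partial traverse can be replaced by the corresponding sense at no greater cost -- so the decision at $s$ reduces to picking a subset $T$ to probe; second, that with $D$ tuned to lie strictly between the minimum expected cost achievable when some subset sums to exactly $S$ and the minimum when no such subset exists, the optimal first action at $s$ is to sense a particular $e_i$ if and only if a PARTITION solution exists, and is to directly traverse the safe path otherwise.

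The hard part will be the numerical tuning: the gap between the ``partition-possible'' and ``partition-impossible'' optimal expected costs must be at least inverse-polynomial so that all weights, blocking probabilities, and sensing costs can be specified in polynomially many bits, analogously to the $h<2^{-n/2-1}$ choice in the proof of Theorem~\ref{thm:CTPtheorem}. Because blocking probabilities are $1/2$, all relevant expectations are dyadic rationals, which makes precision tractable by direct computation; the subtler technical point will be ruling out ``adaptive'' sensing strategies that probe, observe a result, and then choose further probes based on the observation -- this must be shown not to beat the non-adaptive commit-after-sensing optimum, which follows because in our construction every probe has the same binary outcome distribution and the conditional expected cost structure is symmetric across probes.
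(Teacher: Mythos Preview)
Your reduction has a genuine gap: the objective you derive, $\Phi(|T|) + \sum_{i\in T} a_i$, does \emph{not} encode PARTITION or SUBSET-SUM. Because every uncertain edge has the same blocking probability $1/2$, the post-sensing expected traversal cost depends only on $|T|$, exactly as you say. But then, for each fixed cardinality $k$, minimising $\sum_{i\in T} a_i$ over $|T|=k$ is trivial --- take the $k$ smallest $a_i$. Iterating over $k=0,\dots,n$ and picking the best gives the global optimum in polynomial time, so the instance you build is easy regardless of whether the PARTITION instance is a yes- or no-instance. Allowing adaptive probing does not rescue the reduction: with identical success probabilities the problem of choosing which subset to open (stopping on the first failure) is a symmetric sequential testing problem whose optimal policy is to probe in nondecreasing order of cost and stop at a threshold determined by $\Phi$, again computable in polynomial time. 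In short, your construction lacks any combinatorial coupling between the $a_i$ and the edge identities beyond their sensing prices, and price alone is separable.

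For contrast, the paper's reduction is from \textsc{Vertex Cover} and uses a structurally different mechanism. Sensing an edge $f(e)$ is \emph{free}, but only from the node $f(v)$ when $v$ is incident to $e$ in the VC instance; visiting $f(v)$ costs a fixed $2C$ round-trip. Hence cheaply sensing \emph{all} edges on the ``sensing path'' requires visiting a set of vertex-nodes whose preimages cover $E$, and the expected gain from subsequently using the sensing path is tuned so that it offsets the cost of at most $k$ such visits but not $k+1$. The hardness comes from the incidence structure, not from balancing a sum of numerical sensing costs. If you want to repair your approach, you would need an analogous coupling --- e.g., making the sensing cost of $e_i$ depend on \emph{which other edges have already been sensed} via shared locations --- at which point you are essentially reinventing the set-cover/vertex-cover idea.
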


\inlineproofenvironment\ By reduction from the NP-complete problem vertex cover (VC) \cite{GJ}.
Let $G = (V, E)$ be an undirected graph, for which we
need to decide if there is an $S\subseteq V$ of size at most $k$ 
such that every edge in $E$ is incident on a vertex in $S$ 
($S$ is called a vertex cover of size $k$). 
The idea of the proof is, informally, for a policy to benefit if it 'has sensed'
all of the edges in a given VC instance, where sensing of all neighbors 
of a vertex can be done at some constant cost. By tweaking constants,
these actions will be beneficial if and only if it is possible to sense all edges in
the VC instance by sensing the neighbors of at most $k$ vertices. 

Construct the corresponding
CTP-PATH graph $G^{\prime}=(V^{\prime}, E^{\prime})$ as follows 
(see Fig. \ref{fig:VCreduction}).
For each vertex $v \in V$, construct a ``vertex'' node 
$f(v) \in V^{\prime}$, an edge $(s,f(v))$ with a cost $C$ defined below, 
and an infinite-cost edge $(f(v),t)$.
Also construct the ``default edge'' $(s,t) $ with cost $4$.
The above edges are always traversable.

Construct the ``sensing path'', a path consisting of a

``leader'' always traversable edge $e_0$ with cost $L$
starting at $s$, and for each edge $e\in E$, a zero-cost 
edge $f(e)$ in sequence, and finally an infinite-cost edge $(u, t)$.
The probability that each of the latter edges is
blocked is $\epsilon $, a small positive number defined below, except for $(u,t)$ which has infinite-cost,
therefore is never traversable. Additionally, we have a two edge ``uncertain'' path
$((s, x), (x, t))$ where $(s,x)$ is always traversable and costs $2$, and
$(x, t)$ is traversable with probability $1/2$ and costs $0$.
Note that the resulting graph $G^{\prime}$ consists only of edge-disjoint
paths leading from $s$ to $t$.

\begin{figure}[htb]
  \centering
\scalebox{1}{\includegraphics{../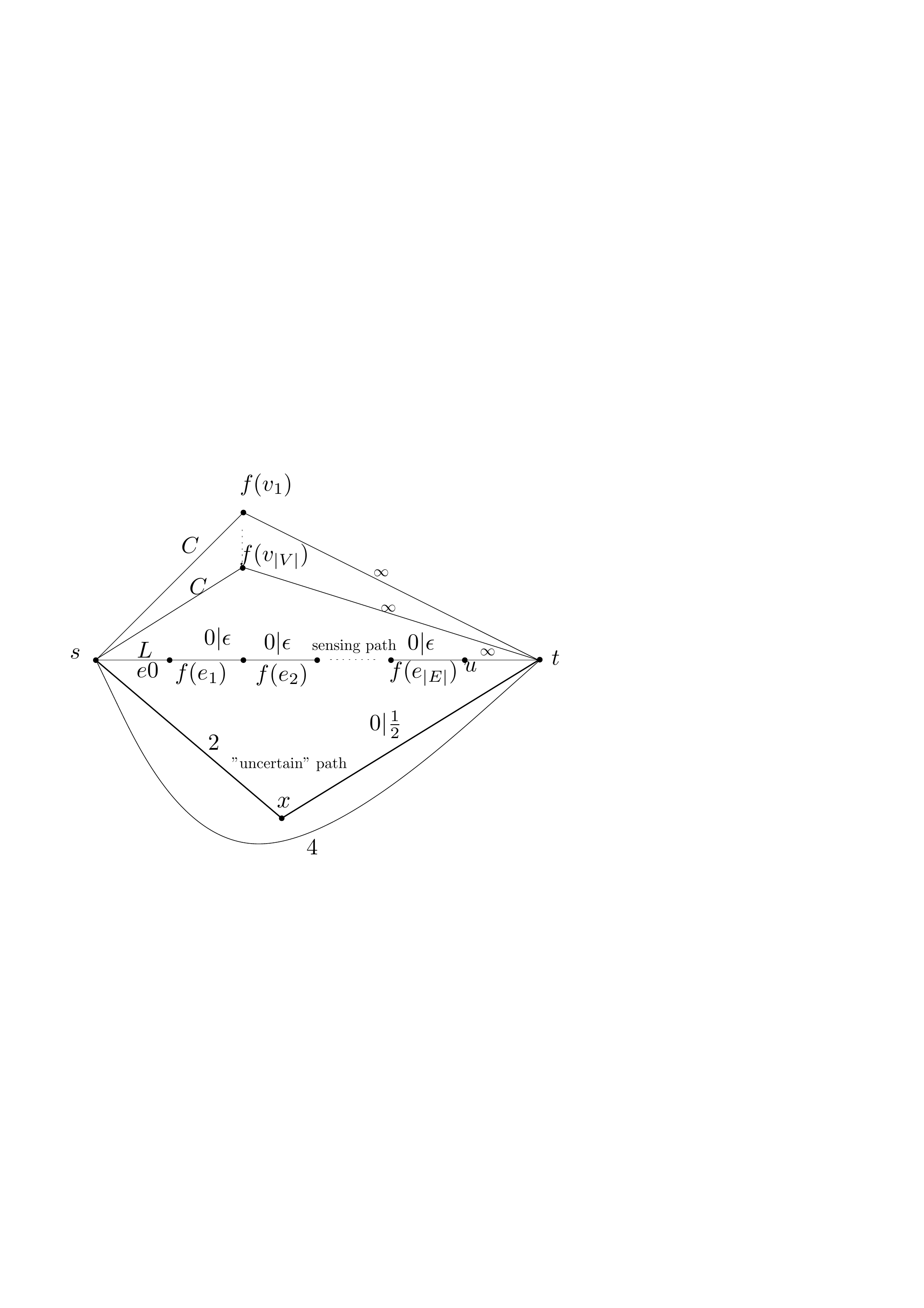}}

\vspace{-0.3cm}
\caption{CTP-graph for reduction from vertex cover}
    \label{fig:VCreduction}\vspace{-0.3cm}
\end{figure}

The sensing cost from ``vertex'' node $f(v)$ on edge $f(e)$ is 0 for all $e$ incident on
$v$ in $G$. Sensing $(x,t)$ costs $0$ from $u$.
All other sensing costs are infinite. We show that for parameter values
defined below, the optimal policy is to immediately traverse $(s, t)$ if and only if
$G$ does not have a vertex cover of size $k$. 

Assuming no sensing, there are only two ``reasonable'' traversal policies:
traversing $(s,t)$ immediately for a cost of $4$, or trying the uncertain path, which costs
$2$ if $(x,t)$ is traversable, and 8 if not, for an expected cost of $5$.
Thus the optimal policy with no sensing is to traverse $(s,t)$. However,
if sensing is allowed, the ``value of information'' of knowing
whether $(x,t)$ is traversable is: $VOI(x,t) = 4-(2\times 1/2+4\times 1/2)=1$,
due to the fact that if $(x,t)$ is revealed as traversable (which happens with
probability $1/2$) we gain 2 by taking the "`uncertain"' path instead of $(s,t)$, and otherwise gain nothing.

We show below that for appropriate values of $L$, it is not beneficial to try to get to $u$ in order
to sense $(x, t)$ unless sensing reveals that the path to $u$ is unblocked.
The probability that at least one edge on the sensing path is blocked
is at least $\epsilon $.
However, all the edges in the sensing path can be sensed (stopping if any 
blocked edge is sensed) for a cost $C_{sense}$ defined below.
If $G$ has a vertex cover of size $k$, then  $C_{sense}\leq 2Ck$
by visiting $k$ ``vertex'' nodes.
However, if the smallest vertex cover is of size $k'\geq k+1$, 
the expected sensing cost becomes:

\begin{equation}
C_{sense} =  2C\sum_{i=1}^{k'} \prod_{j=1}^{i-1} (1-\epsilon )^{d_j}\geq 2Ck'(1-\epsilon )^{|E|}
\end{equation}

\noindent where $d_j$ is the number of previously unsensed edges of the $j$th vertex
in the (unknown) optimal sensing order. For any $0<\alpha<1$, set $\epsilon$ such that 

\[
\epsilon =  1-(\frac{k+1-\alpha}{k+1})^{\frac{1}{|E|}}
\]

Then, as $2Ck'(1-\epsilon )^{|E|}$$ \geq 2C(k+1-\alpha)$, we have that

$$C_{sense} \geq 2C(k+1-\alpha)$$

Now set $L= \frac{1}{2}-\frac{\epsilon}{4}$ and:
\[
C = \frac{\epsilon(1-\epsilon)^{|E|}}{2(2k+1-\alpha)}
\]

To complete the proof, it is sufficient to prove the following claim:

\begin{claim}
The optimal action at $s$ is to traverse $(s, t)$ if and only if
$G$ does not have a vertex cover of size $k$. 
\end{claim}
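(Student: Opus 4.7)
My plan is to reduce the claim to comparing two candidate strategies --- taking the default edge $(s,t)$ for cost 4, and a ``verify-via-vertex-cover, then attempt the sensing path'' strategy --- and then use the vertex-cover bounds already established in the excerpt to characterize when the latter beats the former.

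First, observe that $t$ is reachable only via $(s,t)$ (cost 4) or $(s,x,t)$ (cost 2, requires $(x,t)$ traversable), since every other edge into $t$ has infinite cost. The only informational benefit of exploration is learning the status of $(x,t)$, for which $\text{VOI}(x,t)=1$. This status can only be learned by traveling to $x$ (one-way cost 2, exceeding the VOI and hence dominated), or by reaching $u$ and sensing $(x,t)$ for free there. Reaching $u$ costs $L$ one-way, so round-trip $2L=1-\epsilon/2$, and requires every sensing-path edge $f(e)$ to be unblocked; thus any strategy that might beat the default consists of a (possibly adaptive) sequence of $f(v)$ verification visits, each a $2C$ round-trip from $s$, possibly followed by a single attempt to reach $u$.

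Next I would rule out partial verification. Attempting the sensing path after verifying $d<|E|$ edges yields, by decomposing over the three outcomes (some verified edge blocked; all verified unblocked but some unverified blocked; all $|E|$ unblocked), total expected cost $C_\mathrm{sense}(d)+4+2L(1-\epsilon)^d-(1-\epsilon)^{|E|}$. Since $2L=1-\epsilon/2 > 1-\epsilon \ge (1-\epsilon)^{|E|-d}$ whenever $d<|E|$, the term $2L(1-\epsilon)^d$ strictly exceeds $(1-\epsilon)^{|E|}$, making this cost strictly larger than 4. Therefore any beat-4 policy fully verifies all $|E|$ sensing-path edges (stopping sensing early the moment a blocked edge is found) and attempts only when every verification succeeds. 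For such policies, a direct expected-value computation collapses to $C(\pi)=C_\mathrm{sense}(\pi)+4-(\epsilon/2)(1-\epsilon)^{|E|}$, which is less than $4$ if and only if $C_\mathrm{sense}(\pi)<(\epsilon/2)(1-\epsilon)^{|E|}=C(2k+1-\alpha)$, the last equality coming from the paper's choice of $C$.

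Finally I would plug in the two vertex-cover bounds derived in the excerpt. If $G$ has a vertex cover of size $k$, sensing sequentially through those vertices gives $C_\mathrm{sense}\le 2Ck<C(2k+1-\alpha)$, which reduces to $\alpha<1$; hence some strategy beats cost $4$ and the optimal first action is not $(s,t)$. Conversely, if every vertex cover has size at least $k+1$, then $C_\mathrm{sense}\ge 2Ck'(1-\epsilon)^{|E|}\ge 2C(k+1-\alpha)>C(2k+1-\alpha)$ (again by $\alpha<1$), so no strategy beats $4$ and the optimal first action is $(s,t)$. The main technical hurdle is confirming that the vertex-cover lower bound on $C_\mathrm{sense}$ applies to \emph{any} adaptive sensing ordering; this follows because any sequence that completely verifies the sensing path must visit at least $k'$ distinct vertices (where $k'$ is the minimum vertex-cover size) and each such visit is reached with probability at least $(1-\epsilon)^{|E|}$, since at most $|E|$ edges can have been sensed before it.
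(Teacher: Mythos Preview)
Your proposal is correct and follows essentially the same approach as the paper. Both proofs (i) rule out attempting the sensing path without full verification by showing the expected gain is negative (you via the explicit cost formula $C_{\mathrm{sense}}(d)+4+2L(1-\epsilon)^d-(1-\epsilon)^{|E|}$, the paper via the gain bound $g\le(1-\epsilon)-2L=-\epsilon/2$), and (ii) compare the fully-verified strategy against the default $(s,t)$ using the upper bound $C_{\mathrm{sense}}\le 2Ck$ when a cover exists and the lower bound $C_{\mathrm{sense}}\ge 2C(k+1-\alpha)$ otherwise, which is exactly what the paper computes as $g'>0$ and $g''<0$. Your cost-based threshold $C_{\mathrm{sense}}<(\epsilon/2)(1-\epsilon)^{|E|}=C(2k+1-\alpha)$ is algebraically identical to the paper's gain inequalities since $1-2L=\epsilon/2$; the only stylistic difference is that you frame everything as absolute expected cost while the paper frames it as gain relative to the default, and you are slightly more explicit about why the $C_{\mathrm{sense}}$ lower bound survives adaptive orderings.
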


\begin{proof} We show the following: 

\begin{itemize}

\item If there is no cover in size $k$, the optimal policy is to traverse (s,t). 

\item Otherwise, the optimal policy is as follows: \textit{visit the ``vertex'' nodes
constituting the cover, doing the appropriate sensing actions; if the
sensing path is unblocked, visit $u$ in order to sense $(x,t)$, and
then take the path from $s$ to $t$ through $x$ if $(x,t)$ is found traversable.} (In this case, given the optimal policy, one can straightforwardly 
construct the vertex cover of size $k$.)
\end{itemize}

Note that it is suboptimal
to try the sensing path unless assured that all
edges leading to $u$ are traversable, because then there is
at least one edge that can be blocked with probability $\epsilon $,
in which case attempting this path results in no positive gain.
To see this, note that the traversal cost $2L$ must be paid anyway, 
thus the total expected gain from trying the sensing path is:

\[
g \leq ( 1-\epsilon) VOI(x,t) -2L = 1-\epsilon -2L = 
1-\epsilon -2(\frac{1}{2}-\frac{\epsilon}{4}) = -\frac{\epsilon}{2} < 0
\]

This also holds for all policies that attempt some sensing actions before
trying the sensing path, but that do not make sure that $u$ is reachable
before trying the sensing path.

Now, if there is a vertex cover of size $k$, the expected cost of
sensing all edges in the sensing path is at most $2Ck$. 
If $u$ is found to be reachable (prior probability $(1-\epsilon)^{|E|}$ of that happening)
use the sensing path (which costs $2L$)
to sense $(x,t)$, gaining the expected $VOI(x,t)$ of 1.
The total expected gain in this case is positive:
\[g'\geq
(1-\epsilon)^{|E|} (VOI(x,t)-2L)-2Ck = 
(1-\epsilon)^{|E|}(1-(1-\frac{\epsilon}{2})) - 
2k\frac{\epsilon(1-\epsilon)^{|E|}}{2(2k+1-\alpha)}
\] 
\[
= \epsilon(1-\epsilon)^{|E|}(\frac{1}{2}-\frac{2k}{2(2k+1-\alpha)}) > 0
\]
 
Finally, if there is no vertex cover of size $k$, then the expected gain for
sensing in the best case is (assuming the policy of sensing all edges in the
sensing path from only $k+1$ ``vertex'' nodes, costing
at least $2C(k+1-\alpha )$ in expectation) is at most:
\[g''\leq
(1-\epsilon)^{|E|}(VOI(x,t)-2L) - 2C(k+1-\alpha ) = 
\]
\[
(1-\epsilon)^{|E|}(1-(1-\frac{\epsilon}{2})) - 
2(k+1-\alpha)\frac{\epsilon(1-\epsilon)^{|E|}}{2(2k+1-\alpha)}
\]
\[
 = \epsilon(1-\epsilon)^{|E|}(\frac{1}{2}-\frac{2(k+1-\alpha)}{2(2k+1-\alpha)}) < 0
\]
and thus the optimal policy here is not to perform sensing at all, but to
traverse $(s,t)$ immediately. \qed

\end{proof}

\section{Discussion}

Having shown that stochastic CTP is PSPACE-hard, several related questions on
variants of CTP and CTP with restricted topologies arise.
One issue of particular interest is the question of efficiently finding
approximately optimal actions. The proofs in this paper make use of rather small gaps between expected
values of two candidate actions, and thus leave open the possibility of efficient
approximation algorithms.

Studies of the \textit{competitive analysis} of the Canadian Traveler
Problem reveal rudimentary bounds on approximability. Denoting by $k$
the number of uncertain edges in an instance, there exists for the
undirected case polynomial-time algorithms achieving competitive
ratios of $2k+1$ \cite{SW}. As a consequence, stochastic CTP can be
approximated within $2k+1$. With a slightly improved analysis, the
same algorithm yields a $2n+1$-approximation. In the directed case,
existing results from competitive analysis only yield approximations
of $2^{k+1}+1$ and $2^{n+1}+1$, respectively \cite{XHSZZ}.

These approximation algorithms forego entirely the stochastic nature of the problem
and leave open considerable improvements. At the time of this writing, no notable
hardness of approximation results are known.

Another issue is: what is the most general graph topology under which CTP is tractable
or easy to approximate? An efficient algorithm was shown \cite{BFS} for disjoint-path
graphs, based on a lemma that there exists an optimal policy that is committing:
(such a policy never returns to the source vertex unless its 'current' path to the target vertex is known
to be blocked). But any departure from the disjoint-path structure (such as adding even one more
edge that crosses between vertices in two of the paths) complicates things considerably
by voiding the optimality of committing policies. 

The \textit{st}-reliability problem (finding the probability that an unblocked path from $s$ to $t$ exists)
appears to be an essential building block in solving CTP. The reduction in \cite{PY} shows that CTP is 
intractable for almost any graph topology for which \textit{st}-reliability is intractable as well.
An open research question is whether CTP is tractable or easy to approximate for
graph topologies in which \textit{st}-reliability is tractable, such as ``tree-structured'' graphs,
or the more general series-parallel graphs. 

Sensing-CTP is generally harder than CTP for restricted topologies,
as shown in Section \ref{sec:sensing}.
Other variants of CTP, such as CTP-Dep, are also harder than CTP, as
dependencies can act like remote sensing. It was shown that CTP-Dep is NP-hard for disjoint
path graphs \cite{zarchy10thesis}. However, when considering topological restrictions, one must
also consider the topology of the dependency-graph, and the hardness proof in \cite{zarchy10thesis}
used an essentially unrestricted topology Bayes network to represent the dependencies.

\section{Acknowledgments}
This research is partially supported by the Israel Science
Foundation grant 305/09, the Lynn and William Frankel
Center for Computer Sciences, and by ERC Advanced
Investigator Grant 226203.

\bibliographystyle{elsarticle-num}
\bibliography{DPhD-cites}

\appendix
\section{}\label{app:ApA}

\subsection{Baiting Gadgets}\label{app:ApA1}

\vspace{3mm}

Let $g=BG(u,v)$ be a baiting gadget with a parameter $L>1$,  defined in Section \ref{sec:BG} (see Fig.\ \ref{fig:BGsubfig1}). 
Recall that $\pi$ (as defined in Section \ref{sec:BG}) is the following policy for $g$:
\textit{At $u$, traverse $(u,v_1)$. At $v_i$, for any $i<N$, do as follows: if $(v_i,t)$ is unblocked, reach the destination through $(v_i,t)$ for cost of zero. However, if $(v_i,t)$ is blocked, traverse $(v_i,v_{i+1})$ for a cost of $L/(N+1)$. At $v_N$, if $(v_N,t)$ is blocked, traverse $(v_N,v)$}. 

By construction of the CTP-graph, we assume that any policy other than traversing $(u,v)$ results in a cost of at least $1$ (see Section \ref{sec:BG}).
\vspace{2mm}

Apart from $\pi$, other policies at $u$ that are not clearly suboptimal are:

\begin{itemize}
	\item \textit{choose not to traverse $(u,v_1)$}.
	
	\item The following type of policies denoted by $\pi_j$, for $j\leq N$ : \textit{execute $\pi$ until reaching $v_j$; if $(v_j,t_j)$ is unblocked, reach the destination through $(v_j,t_j)$; otherwise, retreat to $u$ and execute an optimal policy with an expected cost of $M_j\geq 1$.}
	
\end{itemize}

Finally, we set $N=2^{\left\lceil log_2(4L)\right\rceil}-1$, implying $N+1\geq 4L$.

\setcounter{claim}{1}

\begin{claim}
When at $u$ for the first time, $\pi$ is optimal for a baiting gadget $BG(u,v)$ with a parameter $L>1$.
After reaching $v$, it is suboptimal to backtrack to $u$.
\end{claim}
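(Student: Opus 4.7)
The plan is to upper-bound the expected cost $E[\pi]$ by a quantity strictly less than $1$, and then rule out every non-obviously-suboptimal alternative by comparing its expected cost against this bound. Three candidate deviations must be eliminated: refusing to traverse $(u,v_1)$ at $u$; the family of retreat policies $\pi_j$ that abort at some $v_j$ after its shortcut is observed blocked; and backtracking from $v$ through $g$ after reaching $v$.

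First I would compute $E[\pi]$ explicitly. Writing $c = L/(N+1)$ and letting $Y_k$ denote the expected cost from $v_k$ to $t$ under $\pi$, the recursion $Y_N = (c+L)/2$ and $Y_k = (c + Y_{k+1})/2$ for $k < N$ telescopes to $Y_k = c + (L-c) \cdot 2^{-(N-k+1)}$, so $E[\pi] = c + Y_1 = 2c + (L-c) \cdot 2^{-N}$. Because $N+1 = 2^{\lceil \log_2(4L) \rceil} \geq 4L$, we obtain $c \leq 1/4$ and the second term is negligible for $L > 1$, giving $E[\pi] < 1$. Combined with the construction-level assumption that any policy at $u$ not starting with $(u,v_1)$ costs at least $1$, such deviations are strictly dominated by $\pi$.

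Next I would address the retreat policies $\pi_j$. Since $\pi_j$ and $\pi$ agree except on the event $A_j$ that $v_j$ is reached with its shortcut observed blocked, it suffices to compare conditional costs on $A_j$. On $A_j$, policy $\pi$ incurs expected continuation cost $R_j = c + Y_{j+1}$ for $j < N$ and $R_N = c + L$, using the closed form for $Y_{j+1}$ from Step 1. Meanwhile $\pi_j$ retraces $jc$ back to $u$ along the path already traversed and then incurs the additional cost $M_j \geq 1$. A direct algebraic check, using $N+1 \geq 4L$, shows $jc + M_j \geq R_j$ for every $j \leq N$; the binding case is $j = N$, where the inequality simplifies to $N+1 \geq 2L$, which follows \emph{a fortiori}.

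Finally, for the backtracking claim, observe that by the time $v$ is reached along $\pi$, all $N$ intermediate shortcuts have been observed blocked, so the only remaining edges in $g$ that connect to $t$ are the two cost-$L$ shortcuts $(v,t)$ and $(u,t)$. Backtracking from $v$ to $u$ through $g$ costs at least $L$ (the total path weight $L$ or the direct edge), and then reaching $t$ from $u$ within $g$ requires another $L$ via $(u,t)$, totaling $2L$ against the $L$ cost of using $(v,t)$ directly. The main obstacle will be the bookkeeping in Step 3: the closed form of $R_j$ must be tight enough that $jc + M_j \geq R_j$ follows from $N+1 \geq 4L$ uniformly in $j$, and in particular at $j = N$ where both sides are close to $L$ and the parameter choice is just barely sufficient.
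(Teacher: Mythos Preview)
Your proposal is correct and follows essentially the same structure as the paper's proof: bound $C(\pi)<1$ to eliminate the option of not entering the gadget, rule out each retreat policy $\pi_j$, and observe that once $v$ is reached all zero-cost shortcuts are known blocked so retracing costs at least $L\geq K$. The only organizational difference is that you compare conditional continuation costs on the event $A_j$ (reducing the $j=N$ case to $N+1\geq 2L$), whereas the paper compares the full expectations $C(\pi)$ and $C(\pi_j)$ and finishes by analyzing the auxiliary real function $f(x)=\tfrac{1}{4}(2-x)+2^{x-N}L$ over $0<x\leq N$; the two arguments are equivalent, with yours slightly cleaner.
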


\begin{proof}
Denote by $K$ the expected cost of the optimal policy executed once $v$ is reached. As there is an $L$ cost shortcut edge $(v,t)$, it is clear that $K\leq L$ therefore it is always suboptimal to retrace $g$ once $v$ is reached. 
We first show that $C(\pi)<1$, hence choosing not to traverse $(u,v_1)$ is suboptimal. 
 
Note that for every $i\leq N$, the probability that $(v_i,v_{i+1})$ is traversed in $\pi$ is $(\frac{1}{2})^i$. Hence we have

\begin{equation}
C(\pi)=\frac{L}{N+1}\sum_{i=0}^N(\frac{1}{2})^i+(\frac{1}{2})^NK
\end{equation}

\noindent Thus

\begin{equation}\label{eq:AppAeq1}
C(\pi)=\frac{2L}{N+1}(1-2^{-(N+1)})+2^{-N}K
\end{equation}

Then, as $K\leq L$, $N+1\geq 4L$ and $L>1$, we have that

\begin{equation}
C(\pi)<\frac{2L}{4L}+2^{-N}L<\frac{3}{4}<1.
\end{equation}

\noindent As required.

\vspace{4mm}

Next we show that for every $j\leq N$, $C(\pi)<C(\pi_j)$, hence for every $j\leq N$, the policy $\pi_j$ is suboptimal. We have that

\begin{equation}
C(\pi_j)=\frac{L}{N+1}\sum_{i=0}^{j-1}(\frac{1}{2})^i+(\frac{1}{2})^j(\frac{jL}{N+1}+M_j)
\end{equation}

\noindent Thus

\begin{equation}
C(\pi_j)=\frac{2L}{N+1}(1-2^{-j})+\frac{2^{-j}jL}{N+1}+2^{-j}M_j
\end{equation}

As $K\leq L$, and $1< M_j$, it is sufficient from (\ref{eq:AppAeq1}) to show that for every $0<j\leq N$

\begin{equation}
\frac{2L}{N+1}(1-2^{-(N+1)})+2^{-N}L < \frac{2L}{N+1}(1-2^{-j})+\frac{2^{-j}jL}{N+1}+2^{-j}
\end{equation}

Therefore

\begin{equation*}
\frac{2L}{N+1}(2^{-j}-2^{-(N+1)}-2^{-j-1}j)+2^{-N}L < 2^{-j}
\end{equation*}

Hence we need to show that for every $0<j\leq N$,

\begin{equation*}
\frac{2L}{N+1}(1-2^{j-N-1}-\frac{j}{2})+2^{j-N}L <1
\end{equation*}

As $N+1\geq 4L$ and $L>1$, it is sufficient to show that for every $0<j\leq N$, 

\begin{equation}\label{eq:AppAeq2}
\frac{1}{4}(2-j)+2^{j-N}L<1
\end{equation}

And inequality (\ref{eq:AppAeq2}) follows since the function

\begin{equation*}
f(x)=\frac{1}{4}(2-x)+2^{x-N}L
\end{equation*}

over the reals, has only one extremum, $f(0)<1$ ,$f(N)<1$  and 

$$\lim_{x\to\infty}f(x)=\lim_{x\to-\infty}f(x)=\infty$$

\end{proof}

\subsection{Observation Gadgets}\label{app:ApA2}

\vspace{3mm}

Let $g=OG(u,v,o)$ be an observation gadget as defined in Section \ref{sec:OG}, and seen in Fig.\ \ref{fig:OGsubfig2}.
Recall that $L>8$, and $L_1=5L/8$. $\pi_g$ is the following partial policy for $OG(u,v,o)$: \textit{At $u$, cross $BG_1$ (observe $(v_1,v_4)$). Then cross $BG_2$. If either $(v_1,v_4)$, or $(v_2,v_3)$ is found blocked, reach $t$ by traversing the $3L/2$ cost shortcut edge $(v_2,t)$. However, if both $(v_1,v_4)$, and $(v_2,v_3)$ are unblocked, traverse $(v_2,v_3,o,v_4,v_1,v'_1)$ (at $o$, observe the edges incident on $o$, in case there are any), and cross $BG_3$. }
 
We again assume, by construction of the CTP-graph,
that any policy at $u$ other than crossing $BG_1$ results in a cost of at least $1$.

\begin{claim}
When at $u$ for the first time, $\pi_g$ is an optimal policy for $g$.
\end{claim}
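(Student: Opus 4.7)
The plan is to proceed by local case analysis at each decision point of the gadget, invoking Claim~\ref{clm:BGcLaim} to pin down behaviour inside the three embedded baiting gadgets $BG_1$, $BG_2$, $BG_3$, and then verifying the two remaining genuine decisions: (a)~at $v_2$, after the states of $(v_2,v_3)$ and $(v_1,v_4)$ have been revealed, whether to take the $3L/2$ shortcut $(v_2,t)$ or traverse the observation loop; and (b)~at $o$, whether to deviate through the exam section or into another observation gadget. First I would apply Claim~\ref{clm:BGcLaim} to the three baiting gadgets: for $BG_1$ the standing assumption already provides the baseline cost $\geq 1$ for any non-crossing action; for $BG_2$ at $v_1$ and $BG_3$ at $v'_1$, the corresponding baselines follow from the same assumption combined with the fact that any attempt to leave by backtracking through a previously crossed baiting gadget is suboptimal by Claim~\ref{clm:BGcLaim}. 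Consequently, each baiting gadget must be crossed in sequence.

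The second step is the decision at $v_2$. Conditioning on the four equally likely joint outcomes of the two $3/4$-blocked loop edges: in the unique ``both unblocked'' case (probability $1/16$), the loop path $v_2\to v_3\to o\to v_4\to v_1\to v'_1$ has fixed cost $2L_1+1=5L/4+1$, which is strictly less than $3L/2$ because $L>8$, and the subsequent crossing of $BG_3$ adds only a small expected cost by Claim~\ref{clm:BGcLaim}. In each of the three remaining outcomes at least one loop edge is blocked, so every non-shortcut route to $o$ or back to $v_1$ must either traverse \emph{two} cost-$L_1$ arcs (total $2L_1=5L/4$ plus any detour), or traverse one $L_1$-arc and recross $BG_2$ at expected cost roughly $3L/2$; direct comparison shows the always-unblocked $3L/2$ shortcut $(v_2,t)$ dominates in each of these three cases.

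The third and most delicate step is to rule out deviations from $o$. For the exam-section escape, the guard edges $(r_2,r_3)$ and $(r'_1,r'_2)$ each have blocking probability $p_1 > 1-2/(3L+1)$, so with probability exceeding $1-(2/(3L+1))^2$ at least one guard is blocked; the resulting expected cost of any attempt to reach the remainder of the graph from $o$ through the exam section then exceeds what $\pi_g$ would spend. For entering another observation gadget $g'$ through its observation terminal (identified with $o$): if $g'$ has not yet been crossed, its $3/4$-blocked loop edges force an expected cost of at least $3L/2$ via $g'$'s shortcut with high probability; if $g'$ was already crossed under $\pi_g$, then every internal zero-cost shortcut in $g'$ is known to be blocked, so re-entering $g'$ incurs at least $L$ in wasted traversal. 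Both classes of deviation are therefore strictly dominated by continuing with $\pi_g$.

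The main obstacle will be the at-$o$ analysis: the space of potential deviations depends on the global structure of the CTP-graph and on the history of observations so far, so the argument must be made uniformly over the possible states of previously crossed observation gadgets and over which observation terminals are identified with $o$. Because the parameters $L$, $L_1=5L/8$, and $p_1$ are tuned to balance the shortcut cost $3L/2$, the loop traversal cost $2L_1+1$, and the guard-edge blocking probabilities, the remaining inequalities reduce to careful arithmetic in the spirit of the calculation used for Claim~\ref{clm:BGcLaim} in~\ref{app:ApA1}, but the bookkeeping over what has and has not been revealed is where the real work lies.
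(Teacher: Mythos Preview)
Your plan matches the paper's proof almost exactly: invoke Claim~\ref{clm:BGcLaim} for $BG_1,BG_2,BG_3$, then do a case split at $v_2$ on the states of $(v_1,v_4)$ and $(v_2,v_3)$, then rule out escapes from $o$ into the exam section or into another observation gadget. Two small points to tighten. First, the four joint outcomes of the two loop edges are \emph{not} equally likely (you correctly give $1/16$ for ``both unblocked'', so this is just a slip in wording). Second, and more substantively, in the subcase where $(v_2,v_3)$ is unblocked but $(v_1,v_4)$ is blocked, reaching $o$ from $v_2$ costs only $L_1=5L/8<3L/2$, so your ``direct comparison'' in step~2 does not by itself show that the shortcut $(v_2,t)$ dominates; you must invoke the at-$o$ escape analysis of your step~3 \emph{again} here to bound the full cost of going to $o$ and then leaving through another gadget or the exam path by at least $L_1+L_1+3L_1/4>3L/2$ (resp.\ the guard-edge bound). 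The paper makes this explicit as its Cases~2.b and~2.c. Once you wire that dependency in, your argument is the paper's.
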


\begin{proof}

At $u$, as $BG_1$ is a baiting gadget, then by Claim \ref{clm:BGcLaim}, it is optimal to cross $BG_1$. 
When first arriving at $v_1$, after $BG_1$ is crossed, $(v_1,v_4)$ is observed. As $(o,v_4)$ has a cost of $L_1>1$, and $(v_1,v'_1)$ has a cost of $1$, then by Claim \ref{clm:BGcLaim}, it is optimal to cross $BG_2$. Once at $v_2$, if $(v_2,v_3)$ is blocked, it is optimal to take the shortcut $(v_2,t)$ for a cost of $3L/2$.

It remains to show that if $(v_2,v_3)$ is unblocked, the optimal policy at $v_2$ is:

\begin{enumerate}

\item if $(v_1,v_4)$ is unblocked, traverse $(v_2,v_3,o,v_4,v_1,v'_1)$, and cross $BG_3$.

\item otherwise, traverse the shortcut $(v_2,t)$ for a cost of $3L/2$.

\end{enumerate}

\noindent\textbf{case 1: } $(v_1,v_4)$ is unblocked. 

First note that arriving at $v_1$ a second time through $(v_4,v_1)$, $BG_1$ and $BG_2$ are known not to have any blocked shortcut edges, hence by Claim \ref{clm:BGcLaim}, the optimal policy when arriving at $v_1$ a second time is to traverse ($v_1,v'_1)$ and the baiting gadget $BG_3$.

Now, traversing $(v_2,v_3,o,v_4,v_1,v'_1)$, and crossing $BG_3$ bears an expected cost of at most $2L_1+2$, while traversing $(v_2,t)$ costs $3L/2$. 
Hence, as $2L_1+2< 3L/2$, it is optimal at $v_2$ to traverse $(v_2,v_3,o)$. 

We now inspect the possible partial policies at $o$:

\begin{description}
	\item[case 1.a:] Traverse $(o,v_4,v_1,v'_1)$, and cross $BG_3$ for an expected cost of at most $L_1+2$. We denote this partial policy by $\pi'$.
	
	\item [case 1.b: ] Traverse edges of another observation gadget $\widetilde{g}$, in case there exists such $\widetilde{g}$ incident on $o$. 
Suppose that $\widetilde{g}$ has not already been traversed (label the vertices of $\widetilde{g}$ as $\widetilde{v}_i$). Then traversing either $(o,\widetilde{v}_3)$ and trying to traverse $(\widetilde{v}_3,\widetilde{v}_2)$, or traversing $(o,\widetilde{v}_4)$ and trying to traverse $(\widetilde{v}_4,\widetilde{v}_1)$ results in an expected cost of at least $L_1+3L_1/4$. Hence, as $L_1+2<L_1+3L_1/4$, we have that executing $\pi'$ is cheaper than traversing any edges of $\widetilde{g}$.

Next suppose that $\widetilde{g}$ has already been traversed. Therefore we may assume that the policy $\pi_{\widetilde{g}}$ was executed in $\widetilde{u}$, thus the baiting gadgets of $\widetilde{g}$ are known not to contain any unblocked zero-cost shortcuts, hence crossing each such baiting gadget costs $L$. Then traversing $\widetilde{g}$ results in an expected cost of at least $L_1+L$, and as $L_1+2<L_1+L$, we again have that executing $\pi'$ is cheaper than traversing any edges of $\widetilde{g}$.

\item [case 1.c:] Traverse the exam section path. 
Recall that $o$ is identified with $r_5$. Suppose that observation edge $(r_4,r_5)$ is blocked. At $o$, denote the following partial policy by $\pi_1$: \textit{cross $(r_5,r'_1)$; if $(r'_1,r'_2)$ is unblocked, continue with any optimal policy, otherwise, return to $o$, and execute $\pi'$, which can still be executed, for an expected cost of $C(\pi') < L_1+2$}.
Then we have 

$$C(\pi_1)\geq 1+p_1(1+C(\pi'))$$

and as $p_1>1-2/(3L+1)$, we have that 

$$C(\pi')<1+p_1(1+C(\pi'))< C(\pi_1)$$

Therefore executing $\pi'$ is cheaper than executing $\pi_1$.

Now suppose that $(r_4,r_5)$ is  unblocked. Then we can either execute $\pi_1$ (or the symmetric case in which $(r_2,r_3)$ is being inspected) with the same analysis, or we can extend $\pi_1$ with the following policy denoted by $\pi_2$:

\textit{execute $\pi_1$; upon returning to $o$ (after $(r'_1,r'_2)$ is found blocked), cross $(r_5,r_4)$ and $(r_4,r_3)$; if $(r_3,r_2)$ is unblocked, continue with any optimal policy; otherwise return to $r_5$, and execute $\pi'$, which can still be executed for an expected cost of $C(\pi') < L_1+2$}. Then we have that

$$C(\pi'_2)\geq 1+2p_1+p_1^2(1+C(\pi'))$$ 

However, $p_1>1-2/(3L+1)$ entails $C(\pi')<C(\pi'_2)$. Therefore executing $\pi'$ is cheaper than executing $\pi_2$
as well. The policy in which $(r_2,r_3)$ is the first edge among $(r_2,r_3)$ and $(r'_1,r'_2)$ to be inspected is symmetric to $\pi_2$.
Hence we see that traversing any edges of the exam section path is suboptimal.

\end{description}

\noindent \textbf{case 2:} $(v_1,v_4)$ is blocked. 

In this case the following partial policies can be executed at $v_2$:
\begin{description}
\item[case 2.a:] Take the shortcut edge $(v_2,t)$ for a cost of $3L/2$. Denote this policy by $\pi'$.

\item[case 2.b:] Traverse $(v_2,v_3)$ and $(v_3,o)$ for a cost of $L_1$ and at $o$ traverse edges of another observation gadget. As in case 1.b we have 
that traversing edges of $\widetilde{g}$ results in an expected cost of at least $L_1+3L_1/4$. Then, as $3L/2<L_1+L_1+3L_1/4$, we have that $\pi'$ is cheaper than reaching $o$ and traversing any edges of $\widetilde{g}$.

\item[case 2.c:]Traverse $(v_2,v_3)$ and $(v_3,o)$, for cost of $L_1$, and at $o$ traverse the exam path. We define $\pi_1$ and $\pi_2$ as in case 1.c.
Recall that $C(\pi')=3L/2$. However as $p_1>1-2/(3L+1)$, we have that $C(\pi')<L_1+C(\pi_1)$, and $C(\pi')<L_1+C(\pi'_2)$. 
Hence $\pi'$  is cheaper than traversing to $o$ and traversing the any edges of the exam section path.
\end{description}

\qed
	
\end{proof}

\section{}\label{app:ApB}

\subsection{Behavior of reasonable policies}\label{app:ApB1}

\begin{claim}
At $r_0$, any reasonable policy acts as follows. If all the edges in the exam section were observed to be unblocked, cross 

\noindent $(r_0,r^1_1,\cdots,r^{m+1}_4,t)$ until reaching $t$ for a cost of $2(m+1)$. Otherwise, cross the cost $L$ shortcut edge $(r_0,t)$.
\end{claim}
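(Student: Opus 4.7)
The plan is to pin down what information a reasonable policy has accumulated by the time it reaches $r_0$ and then compare the expected cost of the two candidate actions against every alternative. By Claim \ref{clm:BGcLaim} applied to every baiting gadget crossed in the variables and guards sections, all zero-cost shortcut edges inside those gadgets are known to be blocked; in particular, while crossing the guards section every guard edge $(r^i_1,r^i_2)$, $(r^i_2,r^i_3)$ has been observed. Hence the only exam-section edges whose status may remain unknown at $r_0$ are clause edges $(r^i_4,r^i_5)$ that were never observed via an observation gadget. Backtracking from $r_0$ into the guards or variables sections would require re-crossing at least one baiting gadget at cost $L$ with no informational gain, and re-entering any observation gadget through its observation vertex is ruled out by Claim \ref{clm:OGclaim}. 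So the only candidate actions at $r_0$ are the shortcut $(r_0,t)$ of cost $L$ and entering the exam section through $(r_0,r^1_1)$.

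I then split the remaining analysis into three cases. If every exam edge is known unblocked, traversing the exam costs exactly $2(m+1)<L$, so traversal is optimal. If some exam edge is known blocked, no path through the exam section reaches $t$ and the shortcut is the only feasible alternative. The substantive case is when no edge is known blocked but $k\ge 1$ clause edges have unknown status. Any policy entering the exam section here reduces to the ``try'' policy: proceed forward, observing each unknown clause edge as it becomes incident; upon finding one blocked, backtrack to $r_0$ and take the shortcut, because further forward progress is fruitless (the exam has no bypass, and stopping midway yields no more information than backtracking immediately).

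Let $d_1<\cdots<d_k$ be the forward distances at which the unknown clause edges are first observed; each satisfies $d_i\ge 2$ since reaching any $r^j_4$ from $r_0$ costs at least $2$. Conditioning on which unknown is the first to be found blocked,
\begin{equation}
E[\text{try}]=(1-p_1)^k\cdot 2(m+1)+\sum_{i=1}^k (1-p_1)^{i-1}p_1\,(2d_i+L).
\end{equation}
Using $d_i\ge 2$ and setting $q:=(1-p_1)^k$ gives $E[\text{try}]\ge q\cdot 2(m+1)+(1-q)(L+4)=L+4-q(L+2-2m)$. The main technical step will be showing this strictly exceeds $L$. Substituting $L=8m+16$ reduces the task to $q(6m+18)<4$, and the bound $q\le 1-p_1<2/(3L+1)=2/(24m+49)$ yields $q(6m+18)<(12m+36)/(24m+49)<1<4$, so $E[\text{try}]>L$ and the shortcut strictly dominates. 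The careful choice $p_1>1-2/(3L+1)$ is precisely what drives this estimate: it makes each unknown clause edge blocked with near-certainty, so any incomplete observation leaves the shortcut as the optimal action.
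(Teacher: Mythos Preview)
Your proof is correct and follows the same overall structure as the paper: rule out backtracking into the guards/variables sections and into observation gadgets via $o$, handle the fully-observed case by the trivial comparison $2(m+1)<L$, and then show that entering the exam section with at least one unknown clause edge is strictly worse than the shortcut.

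The only noteworthy difference is in the quantitative estimate for the unknown case. You compute the full expected cost of the ``try'' policy over all $k$ unknown clause edges, lower-bound each $d_i$ by $2$, and then use $q=(1-p_1)^k\le 1-p_1$ together with the specific value $L=8m+16$ to conclude $E[\text{try}]>L$. The paper instead argues from the \emph{first} unknown clause edge $e_l$ alone: reaching $r^l_4$ forces traversal of the cost-$1$ edge $(r^l_3,r^l_4)$, and with probability $p_1$ the edge $e_l$ is blocked, after which every continuation (retrace plus shortcut, or any deviation) costs at least $1+L$. Hence the expected cost is at least $1+p_1(1+L)$, which exceeds $L$ directly from $p_1>1-2/(3L+1)$, without needing the value $L=8m+16$. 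Your route is a bit more work but has the minor virtue of making explicit that the ``try'' policy dominates all exam-entering policies; the paper's route is shorter and shows that the inequality on $p_1$ alone (rather than the particular choice of $L$) is what drives the conclusion.
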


\begin{proof}

Retracing $BG(z_0,z_1)$ clearly results in a cost of at least $L$. We first see that unless all the edges in the exam section were observed to be unblocked, any partial policy executed at $r_0$ results in a cost of at least $L$, therefore it is cheaper to take the shortcut $(r_0,t)$ for a cost of $L$. 

At every vertex $r^l_i$, $l\leq m+1$, $,i\leq 5$, any unblocked edge on the exam section path, incident on $r^l_i$, can be traversed. 
At $r^l_2$ there is an additional option to cross either $BG(z_l,z_{l+1})$, or $BG(z_{l-1},z_l)$ which hold no unblocked shortcut edges, hence crossing these results in a cost of at least $L$. If $r^l_5$ is identified with an observation point of some observation gadget $g'$, there is an additional option to traverse edges of $g$. However by an argument identical to case 1.b of \ref{app:ApA2}, traversing any edges of $g$ results in a cost of at least $L$. Hence any deviation from the exam section path results in a cost of at least $L$. 

Suppose that all the edges of the exam section are known to be unblocked. Then, as the exam section contains $2(m+1)$ always traversable cost $1$ edges, and as  $2(m+1)<L$, the optimal policy is to cross the exam section  $(r_0,r^1_1,\cdots,t)$ for the cost of $2(m+1)$.

Otherwise, suppose there are edges in the exam section with unknown status. As all the guard edges were observed upon crossing the guard section,then such an edge is a clause edge. Hence let $e_l=(r^l_4,r^l_5)$ be the first unknown clause edge such that every edge in the path $(r_0,\cdots,r^l_4)$ is known to be unblocked. Finding $(r^l_4,r^l_5)$ blocked results in either retracing the exam section to $r_0$ and taking the cost $L$ shortcut to $t$, or in deviating from the exam section. Hence, as $(r^l_3,r^l_4)$ costs $1$, $L>1$ and $p_1>1-2/(3L+1)$, traversing from $r_0$ to $e_l$ results in an expected cost of at least $1+p_1(1+L)>L$. Hence traversing the shortcut edge $(v_2,t)$ is cheaper.
Obviously, the same argument holds for traversing $e_l$ where $e_l$ is previously known to be blocked.
\qed

\end{proof}

\setcounter{claim}{5}

\subsection{Polynomial size representation}\label{app:ApB2}

We show the computation of $h$, the cost of the default edge $(s,t)$. 
Recall that $L=8m+16$, 
$N=2^{\left\lceil \log_2(4L)\right\rceil}-1$
and
$p_1=1-2^{-\left\lceil \log_2(\frac{3L+1}{2})\right\rceil}$.

From Section \ref{sec:ThmProof} we have that

$$h=B_0+(\frac{1}{4})^{\frac{n}{2}}mP_{r_0}$$

and

$$B_0 = D_{st}+P_{r_0}(D_{pt}+P_{rt}2(m+1)+(1-P_{rt})L)$$

where for a reasonable policy $\pi$, $P_{r_0}$ is the probability of reaching $r_0$ by executing $\pi$, $D_{pt}$ is the total cost from $s$ to $r_0$ while executing $\pi$ in a full-trip weather, $D_{st}$ is the expected cost of executing $\pi$ over shortcut weathers, and $P_{rt}$ is the probability that all the edges in the exam section are unblocked.

From section \ref{sec:ThmProof}, $P_{rt}=(1-p_1)^{3m+2}$ and 

$$D_{pt}=1+(2+\frac{19mL+4}{4})n+(n+m+1)L$$ 

Hence it is left to compute $P_{r_0}$ and $D_{st}$. To do that we define the following.

\vspace{3mm}

For $k>0$, let $G(k)$ be a CTP instance composed of a series of gadgets $g_i$, $1\leq i\leq k$, such that for every $i<k$, $Exit(g_i)$ is identified with $Entry(g_{i+1})$. The gadgets in $G(k)$ are either all baiting gadget with a parameter $L>1$ (then $G(k)$ is denoted by $BG(L,k)$), or are all observation gadgets (then $G(k)$ is denoted by $OG(k)$). Set $s$ to be $Entry(g_1)$. Then we denote the following policy for $G(k)$ as $\pi_k$:
\textit{For every $i\leq k$, cross $g_i$.} 

Let $q(G(k))$ be the probability that $Exit(g_k)$ is reached by executing $\pi_k$. Let $w_1(G(k))$ be the expected traversal cost when $Exit(g_k)$ is reached while executing $\pi_k$. Let $w_2(G(k))$ be the expected cost in case a shortcut to $t$ is taken while executing $\pi_k$. 
Then we have for $k>1$

\begin{equation}
w_2(G(k))=w_2(G(1))+q(G(1))(w_1(G(1))+w_2(G(k-1)))
\end{equation}

\vspace{3mm}

Next, if $G(k)$ is a series of baiting gadgets we have that $q(BG(L,k))=2^{-kN}$ and $w_1(BG(L,k))=kL$. From (\ref{eq:AppAeq1}) we obtain:

\begin{equation}
w_2(BG(L,1))=\frac{2L}{N+1}(1-2^{-(N+1)})-2^{-N}L
\end{equation}

If $G(k)$ is a series of observation gadgets, set

$$N_1=2^{\left\lceil \log_2(\frac{3}{2}4L)\right\rceil}-1$$

Then we have $q(OG(k))=2^{-k(2N+N_1+4)}$, and $w_1(OG(k))=\frac{k(19L+4)}{4}$. We compute $w_2(OG(1))$ to be:

\begin{multline}
w_2(OG(1))=w_2(BG(L,1))+2^{-N}(w_1(BG(L,1))\\+2^{-N}(w_2(BG(\frac{3L}{2},1))+2^{-N-N_1}(\frac{3L}{2})
+2^{-N-N_1-4}(2L_1+1+w_2(BG(L,1)))))
\end{multline}

\vspace{3mm}

We can now compute $P_{r_0}$. Assume w.l.o.g. $n$ is even. Due to symmetry of the \textit{true/false-paths}, and the subsections of the variable sections, we have that

\begin{equation}
P_{r_0}= \left(q(BG(L,1))q(OG(L,m))\frac{3}{4}q(BG(L,1)) q(OG(L,m))\right)^\frac{n}{2}q(BG(L,m+2))
\end{equation}

\vspace{3mm}

To find $D_{st}$ we note the following. Assume w.l.o.g $n$ is even. Due to symmetric considerations, there are parameters $q_{st}$,  $w_{st}$ and $z_{st}$, independent of $i$, such that by executing $\pi$ at $v'_i$, $v'_{i+2}$ is reached with probability $q_{st}$ and an expected cost of $w_{st}$, and $v'_{i+2}$ is \textit{not} reached (i.e. a shortcut it taken) with an expected cost of $z_{st}$. Assume w.l.o.g. the subsection $X_i$ is universal. Then

\begin{equation}
q_{st}=q(BG(L,1))\frac{3}{4}q(OG(L,m))q(BG(L,1))q(OG(L,m))
\end{equation}

\noindent and 

\begin{equation}
w_{st}=2L+4+2mw_1(OG(L,1))
\end{equation}

\noindent We now compute

\begin{multline}
z_{st}=w_2(BG(L,1))+2^{-N}(w_1(BG(L,1))+\frac{1}{4}L+\frac{3}{4}(1+w_2(OG(L,m))+\\
q(OG(L,m))(w_1(OG(L,m))+1+w_2(BG(L,1))+\\2^{-N}(w_1(BG(L,1))+1+w_2(OG(L,m))+q(OG(L,m))+1))))
\end{multline}

Now define $\bar{z}_{st}=z_{st}+q_{st}w_{st}$, set $D_{st}^1=\bar{z}_{st}$, and for every $k> 1$ set

\begin{equation}
D_{st}^k=D_{st}^1+q_{st}D_{st}^{k-1}
\end{equation}

Then

\begin{equation}
D_{st}=D_{st}^{\frac{n}{2}}+(q_{st})^\frac{n}{2}w_2(BG(L,m+1))
\end{equation}

which concludes the computation of $h$. \qed

\end{document}